\newcommand{\x}{\mbox{r}}
\newcommand{\nx}{\bar{\mbox{r}}}
\newcommand{\y}{\mbox{c}}
\newcommand{\ny}{\bar{\mbox{c}}}
\newcommand{\z}{\mbox{g}}
\newcommand{\nz}{\bar{\mbox{g}}}
\newcommand{\fplay}{\tau}
 \newcommand{\set}[1]{\{#1\}}
\newcommand{\sseq}{s_0 s_1 s_2\ldots}
\newcommand{\fseq}{s_0 s_1 \ldots s_k}
\newcommand{\init}{s_I}
\newcommand{\qinit}{q_I}
\newcommand{\PA}{1}
\newcommand{\PB}{2}
\newcommand{\SA}{{S_1}}
\newcommand{\SB}{{S_2}}
\renewcommand{\EB}{{E_2}}
\newcommand{\SR}{S_{P}}
\newcommand{\gamegraph}{G}
\newcommand{\winsure}[1]{\langle \! \langle #1 \rangle \! \rangle_{\mathit{sure}}  }
\newcommand{\winas}[1]{\langle \! \langle #1 \rangle\! \rangle_{\mathit{almost}} }
\newcommand{\was}{\winsure{1}}
\newcommand{\waa}{\winas{1}}
\newcommand{\wbs}{\winsure{2}}
\newcommand{\wba}{\winas{2}}
\newcommand{\outcome}{\mathrm{Outcome}}
\newcommand{\Prb}{\mathrm{Pr}}
\newcommand{\trans}{\delta}
\newcommand{\distr}{{\cal D}}
\newcommand{\supp}{\mathrm{Supp}}
\newcommand{\Aa}{{\cal E}}
\newcommand{\slopefrac}[2]{\leavevmode\kern.1em
  \raise .5ex\hbox{\the\scriptfont0 #1}\kern-.1em
  /\kern-.15em\lower .25ex\hbox{\the\scriptfont0 #2}}
\newcommand{\pat}{\pi} 
\newcommand{\Pat}{\Pi} 
\newcommand{\Paths}{\Pi}
\newcommand{\straa}{\alpha} \newcommand{\Straa}{\mathcal{A}}
\newcommand{\strab}{\beta} \newcommand{\Strab}{\mathcal{B}}
\newcommand{\Inf}{\mathrm{Inf}} 
 \newcommand{\Reach}[1]{\mathrm{Reach}(#1)}
\newcommand{\Safe}[1]{\mathrm{Safe}(#1)}
\newcommand{\Buchi}[1]{\mathrm{Buchi}(#1)}
\newcommand{\coBuchi}[1]{\mathrm{coBuchi}(#1)}
\newcommand{\Parity}[1]{\mathrm{Parity}(#1)}
\newcommand{\coParity}[1]{\mathrm{coParity}(#1)}
\newcommand{\wh}{\widehat}
\newcommand{\wt}{\widetilde}
\newcommand{\Source}{\mathrm{Source}}
\newcommand{\AssumeLive}{\mathsf{AssumeFair}}
\newcommand{\AssumeSafe}{\mathsf{AssumeSafe}}
\newcommand{\AssRed}{\mathsf{AssRed}}
\newcommand\cS{{\cal S}}
\newcommand\cT{{\cal T}}
\newcommand\cO{{\cal O}}
\newcommand\cI{{\cal I}}
\newcommand\AP{P} 
\newcommand{\abc}{\Sigma}
\newcommand{\abcAP}{\Sigma}
\newcommand{\odd}{\mbox{odd}}
\newcommand{\even}{\mbox{even}}
\renewcommand{\implies}{\rightarrow}
\newcommand{\inp}{\tt{in}}
\newcommand{\out}{\tt{out}}
\newcommand{\req}{\tt{req}}
\newcommand{\cancel}{\tt{cancel}}
\newcommand{\grant}{\tt{grant}}
\newcommand{\game}{{\cal G}}
\newcommand{\lang}{L}
\newcommand{\word}{w} 
\renewcommand{\r}{)}
\renewcommand{\l}{(}
\newcommand{\safenv}{{E_s}}
\newcommand{\livenv}{{E_l}}
\newcommand{\prefixes}{\mbox{prefixes}}
\newcommand{\safety}{\mbox{safety}}
\newcommand{\cond}[1]{#1}
\newenvironment{mydefinition}
{
  \smallskip
}
{
  \smallskip
}
\newenvironment{myexample}
{
  \vspace{-2mm}
  \begin{example}
}
{
  \end{example}
  \vspace{-2mm}
}
\title{Environment Assumptions for Synthesis}
\author{Krishnendu Chatterjee\inst{2} \and Thomas A. Henzinger\inst{1} \and
  Barbara Jobstmann\inst{1}} \institute{EPFL, Lausanne \and University of
  California, Santa Cruz}
\begin{document}
\maketitle

\begin{abstract}
  The synthesis problem asks to construct a reactive finite-state
  system from an $\omega$-regular specification. Initial
  specifications are often unrealizable, which means that there is no
  system that implements the specification.  A common reason for
  unrealizability is that assumptions on the environment of the system
  are incomplete.  We study the problem of correcting an unrealizable
  specification $\varphi$ by computing an environment assumption
  $\psi$ such that the new specification $\psi\rightarrow\varphi$ is
  realizable. Our aim is to construct an assumption $\psi$ that
  constrains only the environment and is as weak as possible.  We
  present a two-step algorithm for computing assumptions.  The
  algorithm operates on the game graph that is used to answer the
  realizability question.  First, we compute a safety assumption that
  removes a minimal set of environment edges from the graph.  Second,
  we compute a liveness assumption that puts fairness conditions on
  some of the remaining environment edges.  We show that the problem
  of finding a minimal set of fair edges is computationally hard, and
  we use probabilistic games to compute a locally minimal fairness
  assumption.
\end{abstract}

\section{Introduction}

Model checking has become one of the most successful verification techniques 
in hardware and software design.  
Although the methods are automated, the success of a verification process 
highly depends on the quality of the specifications.
Writing correct and complete specifications is a tideous task: 
it usually requires several iterations until a satisfactory specification is 
obtained.  
Specifications are often too weak 
(e.g., they may be vacuously satisfied \cite{Beer97,Kupfer99b});
or too strong 
(e.g., they may allow too many environment behaviors), 
resulting in spurious counterexamples.
In this work we automatically strengthen the environment constraints within  
specifications whose assumptions about the environment behavior are so weak 
as to make it impossible for a system to satisfy the specification.

Automatically deriving environment assumptions has been studied from 
several points of view.  
For instance, in circuit design one is interested in automatically 
constructing environment models that can be used in test-bench generation 
\cite{Vera,Specman}. 
In compositional verification, environment assumptions have been generated 
as the weakest input conditions under which a given software or hardware 
component satisfies a given specification \cite{Beyer07,Bobaru08}.
We follow a different path by leaving the design completely out of the
picture and deriving environment assumptions from the specification alone.  
Given a specification, we aim to compute a least restrictive environment
that allows for an implementation of the specification.

The assumptions that we compute can assist the designer in different ways.
They can be used as baseline necessary conditions in component-based model 
checking.  
They can be used in designing interfaces and generating test cases for 
components before the components themselves are implemented.
They can provide insights into the given specification.  
And above all, in the process of automatically constructing an 
implementation for the given specification (``synthesis''), they can be used 
to correct the specification in a way that makes implementation possible.

While specifications of closed systems can be implemented if they are 
\emph{satisfiable}, specifications of open systems can be implemented if 
they are \emph{realizable}
---i.e., there is a system that satisfies the specification without 
constraining the inputs. 
The key idea of our approach is that given a specification, if it is
not realizable, cannot be complete and has to be weakened by introducing 
assumptions on the environment of the system.
Formally, given an $\omega$-regular specification $\varphi$ which is not 
realizable, we compute a condition $\psi$ such that the new specification 
$\psi\rightarrow\varphi$ is realizable. 
Our aim is to construct a condition $\psi$ that does not constrain the 
system and is as weak as possible.  
The notion that $\psi$ must constrain only the environment can be captured 
by requiring that $\psi$ itself is realizable for the environment
---i.e., there exists an environment that satisfies $\psi$ without 
constraining the outputs of the system
(in general, in a closed loop around system and environment
---or controller and plant--- 
both $\psi$ and $\varphi$ refer to inputs as well as outputs).  

The notion that $\psi$ be as weak as possible is more difficult to capture.
We will show that in certain situations, there is no unique weakest 
environment-realizable assumption $\psi$, and in other situations, it is 
NP-hard to compute such an assumption.  

\smallskip\noindent{\bf Example.}  During our efforts of formally
specifying certain hardware designs \cite{Bloem07,Bloem07b}, several
unrealizable specifications were produced.  One specification was
particular difficult to analyze.  Its structure can be simplified to
the following example.
Consider a reactive system with the signals $\req$, $\cancel$, and
$\grant$, where $\grant$ is the only output signal.  The specification
requires that (i) every request is eventually granted starting from
the next time step, written in linear temporal logic as
$\always(\req\implies\nextt\eventually\grant)$; and (ii) whenever the
input $\cancel$ is received or $\grant$ is high, then $\grant$ has to
stay low in the next time step, written
$\always((\cancel\vee\grant)\implies \nextt\neg\grant)$.  This
specification is not realizable because the environment can force, by
sending $\cancel$ all the time, that the $\grant$ signal has to stay
low forever (Part~(ii)).  If $\grant$ has to stay low, then a request
cannot be answered and Part~(i) of the specification is violated.
One assumption that clearly makes this specification realizable is
$\psi_1=\always(\neg\cancel)$.  This assumption is undesirable because
it completely forbids the environment to send $\cancel$.  A system
synthesized with this assumption would simply ignore the signal
$\cancel$.
Assumption $\psi_2=\always(\eventually(\neg\cancel))$ and
$\psi_3=\always(\req\implies\eventually(\neg\cancel))$ are more
desirable but still not satisfactory: $\psi_2$ forces the environment
to lower $\cancel$ infinitely often even when no requests are send and
$\psi_3$ is not strong enough to implement a system that in each step
first produces an output and then reads the input: assume the system
starts with output $\grant=0$ in time step~0, then receives the input
$\req=1$ and $\cancel=0$, now in time step~1, it can choose between
(a) $\grant=1$, or (b) $\grant=0$.  If it chooses to set grant to high
by (a), then the environment can provide the same inputs once more ($\req=1$
and $\cancel=0$) and can set all subsequent inputs to $\req=0$ and
$\cancel=1$.  Then the environment has satisfied $\psi_3$ because
during the two requests in time step~0 and~1 $\cancel$ was kept low
but the system cannot fulfill Part~(i) of its specification without
violating Part~(ii) due to $\grant=1$ in time step~1 and $\cancel=1$
afterwards.  On the other hand, if the system decides to choose to set
$\grant=0$ by (b), then the environment can choose to set the inputs to
$\req=0$ and $\cancel=1$ and the system again fails to fulfill
Part~(i) without violating~(ii).
The assumption
$\psi_4=\always(\req\implies\nextt\eventually(\neg\cancel))$, which is
a subset of $\psi_3$, is sufficient.  However, there are infinitely
many sufficient assumptions between $\psi_3$ and $\psi_4$, e.g,
$\psi_3'=(\neg\cancel \wedge\nextt(\psi_3)) \vee \psi_3$.
The assumption
$\psi_5=\always(\req\implies\nextt\eventually(\neg\cancel \vee
\grant))$ is also weaker then $\psi_3$ and still sufficient because
the environment only needs to lower $\cancel$ eventually if a request
has not been answered yet.
Finally, let $\xi= \req\implies\nextt\eventually(\neg\cancel \vee
\grant)$, consider the assumption $\psi_6 = \xi \weakuntil
(\xi\wedge(\cancel\vee \grant)\wedge\nextt\grant)$, which is a
sufficient assumption.  It is desirable because it states that
whenever a request is send the environment has to eventually lower
$\cancel$ if it has not seen a $\grant$, but as soon as the system
violates its specification (Part~(ii)) all restrictions on the
environment are dropped.  If we replace $\xi$ in $\psi_6$ with
$\xi'=\req\implies\eventually(\neg\cancel \vee \grant)$, we get again
an assumption that is not sufficient for the specification to be
realizable.
This example shows that the notion of weakest and desirable are hard
to capture.

\smallskip\noindent{\bf Contributions.}
The realizability problem (and synthesis problem) can be reduced to emptiness 
checking for tree automata, or equivalently, to solving turn-based two-player 
games on graphs.
More specifically, an $\omega$-regular specification $\varphi$ is realizable 
iff there exists a winning strategy in a certain parity game constructed 
from $\varphi$.
If $\varphi$ is not realizable, then we construct an environment assumption 
$\psi$ such that $\psi\rightarrow\varphi$ is realizable, in two steps.
First, we compute a safety assumption that removes a minimal set of 
environment edges from the graph graph.  
Second, we compute a liveness assumption that puts fairness conditions on
some of the remaining environment edges of the game graph:
if these edges can be chosen by the environment infinitely often, then they 
need to be chosen infinitely often.
While the problem of finding a minimal set of fair edges is shown to be 
NP-hard, a local minimum can be found in polynomial time (in the size of 
the game graph) for 
B\"uchi and co-B\"uchi specifications, and in NP $\cap$ coNP for parity 
specifications. 
The algorithm for checking the sufficiency of a set of fair edges is of 
independent theoretical interest, as it involves a novel reduction of 
deterministic parity games to probabilistic parity games.

We show that the resulting conjunction of safety and liveness assumptions 
is sufficient to make the specification realizable, and itself realizable by 
the environment.
We also illustrate the algorithm on several examples, showing that it 
computes natural assumptions.

\smallskip\noindent{\bf Related works.} There are some related works that 
consider games that are not winning, methods of restricting the
environment, and constructing most general winning strategies in games.
The work of~\cite{Faella07} considers games that are not winning, and 
considers \emph{best-effort} strategies in such games.
However, relaxing the winning objective to make the game winning is not 
considered.
In~\cite{CHJ06}, a notion of non-zero-sum game is proposed, where the 
strategies of the environment are restricted according to a given objective, 
but the paper does not study how to obtain an environment objective that is 
sufficient to transform the game to a winning one.
A minimal assumption on a player with an objective can be captured by the 
most general winning strategy for the objective. 
The result of~\cite{Bernet02} shows that such most general winning 
strategies exist only for safety
games, and also presents an approach to compute a strategy, called a 
\emph{permissive strategy}, that subsumes behavior of all memoryless 
winning strategies.
Our approach is different, as it attempts to construct the minimal assumption
for the environment that makes the game winning, and we derive 
assumptions from the specification alone.

\smallskip\noindent{\bf Outline.}
In Section~\ref{sec:preliminaries}, we introduce the necessary
theoretical background for defining and computing environment
assumptions.  
Section~\ref{sec:assumptions} discusses environment assumptions 
and why they are difficult to capture.
In Section~\ref{sec:safety} and~\ref{sec:liveness}, we compute, 
respectively, safety and liveness assumptions, which are then 
combined in Section~\ref{sec:combining}.  

\section{Preliminaries}
\label{sec:preliminaries}

\noindent{\bf Words, Languages, Safety, and Liveness.}
Given a finite alphabet $\Sigma$ and an infinite word $w\in \Sigma^{\omega}$, 
we use $w_i$ to denote the $(i+1)^{th}$ letter of $w$, and $w^i$ to denote 
the finite prefix of $w$ of length~$i+1$.  
Note that the first letter of a word has index~$0$. 
Given a word $w\in\Sigma^\omega$, we write $\even(w)$ for the subsequence of 
$w$ consisting of the even positions ($\forall i \ge 0:\even(w)_i=w_{2i}$).  
Similarly, $\odd(w)$ denotes the subsequence of the odd positions.
Given a set $L\subseteq \Sigma^{\omega}$ of infinite words, we define the set 
of finite prefixes by 
$\prefixes(L)= \{v \in \Sigma^* \mid \exists w \in L, i \ge 0: v=w^i\}$.  
Given a set $L\subseteq \Sigma^*$ of finite words, we define the set 
of infinite limits by 
$\safety(L) = \{ w \in \Sigma^{\omega} \mid \forall i\ge 0: w^i \in L\}$.
We consider languages of infinite words.
A language $L\subseteq \Sigma^{\omega}$ is a \emph{safety} language if 
$L=\safety(\prefixes(L))$.  
A language $L\subseteq \Sigma^{\omega}$ is a \emph{liveness} language if 
$\prefixes(L)=\Sigma^*$.
Every language $L \subseteq \Sigma^{\omega}$ can be presented as the 
intersection of the safety language $\safety(\prefixes(L))$
and the liveness language $\safety(\prefixes(L))\backslash L$.

\smallskip\noindent{\bf Transducers.} 
We model reactive systems as deterministic finite-state transducers.
We fix a finite set $\AP$ of atomic propositions, and a partition of 
$\AP$ into a set $O$ of output propositions and a set $I$ of input 
propositions.  
We use the corresponding alphabets $\abcAP=2^{\AP}$, $\cO=2^O$, and 
$\cI=2^I$.
A \emph{Moore transducer} with input alphabet $\cI$ and output alphabet
$\cO$ is a tuple $\cT=\l Q, q_I, \delta, \kappa\r$, where $Q$ is a finite 
set of states, $q_I \in Q$ is the initial state, 
$\delta$: $Q \times \cI \rightarrow Q$ is the transition function, and
$\kappa$ is a state labeling function $\kappa$: $Q \rightarrow \cO$.
A \emph{Mealy transducer} is like a Moore transducer, except that 
$\kappa$: $Q\times \cI \rightarrow \cO$ is a transition labeling function.
A Moore transducer describes a reactive system that reads words over the
alphabet $\cI$ and writes words over the alphabet $\cO$.  
The environment of the system, in turn, can be described by a Mealy 
transducer with input alphabet $\cO$ and output alphabet $\cI$.
%
%
We extend the definition of the transition function $\delta$ to finite 
words $w \in \cI^*$ inductively by 
$\delta(q,w)=\delta(\delta(q,w^{|w|-1}),w_{|w|})$ for $|w|>0$.
Given an input word $w \in \cI^{\omega}$, the run of $\cT$ over $w$ is 
the infinite sequence $\pi \in Q^{\omega}$ of states such that $\pi_0 =q_I$, 
and $\pi_{i+1} = \delta(\pi_i, w_i)$ for all $i\ge 0$.
The run $\pi$ over $w$ generates the infinite word $\cT(w)\in\abc^{\omega}$ 
defined by $\cT(w)_i=\kappa(\pi_i)\cup w_i$ for all $i\ge 0$ in the case of 
Moore transducers;
and $\cT(w)_i=\kappa(\pi_i,w_i)\cup w_i$ for all $i\ge 0$ in the Mealy case.
The \emph{language} of the transducer $\cT$ is the set
$\lang(\cT)=\set{\cT(w)\mid w \in \cI^{\omega}}$
of infinite words generated by runs of~$\cT$.

\smallskip\noindent{\bf Specifications and Realizability.}
A \emph{specification} of a reactive system is an $\omega$-regular language 
$L\subseteq \Sigma^{\omega}$.
We use Linear Temporal Logic (LTL) formulae over the atomic proposition 
$\AP$, as well as $\omega$-automata with transition labels from $\abc$, to 
define specifications.
%
%
%
%
%
Given an LTL formula (resp.\ $\omega$-automaton) $\phi$, we write 
$\lang(\varphi)\subseteq\abc^\omega$ for the set of infinite words that 
satisfy (resp.\ are accepted by) $\varphi$.  
%
%
A transducer $\cT$ \emph{satisfies} a specification $L(\varphi)$, written 
$\cT \models \varphi$, if $\lang(\cT) \subseteq \lang(\varphi)$.  
Given an LTL formula (resp.\ $\omega$-automaton) $\varphi$, the 
\emph{realizability problem} asks if there exists a transducer 
$\cT$ with input alphabet $\cI$ and output alphabet $\cO$ such that 
$\cT\models\varphi$.
The specification $L(\varphi)$ is \emph{Moore realizable} if such a Moore 
transducer $\cT$ exists, and \emph{Mealy realizable} if such a Mealy
transducer $\cT$ exists.
Note that for an LTL formula, the specification $L(\varphi)$ is Mealy 
realizable iff $L(\varphi')$ is Moore realizable, where the LTL formula 
$\varphi'$ is obtained from $\varphi$ by replacing all occurrences of 
$o \in O$ by $\nextt o$.
The process of constructing a suitable transducer $\cT$ is called 
\emph{synthesis}.
The synthesis problem can be solved by computing winning strategies in 
graph games.

\smallskip\noindent{\bf Graph games.} 
We consider two classes of turn-based games on graphs, namely, 
two-player probabilistic games and two-player deterministic games.
The probabilistic games are not needed for synthesis, but we will use 
them for constructing environment assumptions.
For a finite set~$A$, a probability distribution on $A$ is a 
function $\trans$: $A\to[0,1]$ such that $\sum_{a \in A} \trans(a) = 1$. 
We denote the set of probability distributions on $A$ by $\distr(A)$. 
Given a distribution $\trans \in \distr(A)$, we write 
$\supp(\trans) = \{x \in A \mid \trans(x) > 0\}$ for the support
of $\trans$.
A \emph{probabilistic game graph} 
$\gamegraph =((S, E), (\SA,\SB,\SR),\trans)$ consists of a finite directed
graph $(S,E)$, a partition $(\SA$, $\SB$, $\SR)$ of the set $S$
of states, and a probabilistic transition function $\trans$: $\SR
\rightarrow \distr(S)$.
The states in $\SA$ are {\em player-$\PA$\/} states, where player~$\PA$ 
decides the successor state; 
the states in $\SB$ are {\em player-$\PB$\/} states, where player~$\PB$ 
decides the successor state; 
and the states in $\SR$ are {\em probabilistic\/} states, where the 
successor state is chosen according to the probabilistic transition
function.  
We require that for all $s \in \SR$ and $t \in S$, we have $(s,t) \in E$ iff 
$\trans(s)(t) > 0$, and we often write $\trans(s,t)$ for $\trans(s)(t)$.  
For technical convenience we also require that every state has at least 
one outgoing edge.
Given a subset $E' \subseteq E$ of edges, we write $\Source(E')$ for
the set $\{ s \in S \mid \exists t\in S: (s,t) \in E'\}$ of states 
that have an outgoing edge in $E'$.
The {\em deterministic game graphs} 
are the special case of the probabilistic game graphs
with $\SR = \emptyset$, that is, a deterministic game graph
$\gamegraph=((S,E),(\SA,\SB))$ consist of of a directed graph
$(S,E)$ together with the partition of the state space $S$ into 
player-1 states $\SA$ and player-2 states $\SB$.

\smallskip\noindent{\bf Plays and Strategies.}
An infinite path, or \emph{play}, of the game graph $\gamegraph$ is an 
infinite sequence $\pat=\sseq$ of states such that 
$(s_k,s_{k+1}) \in E$ for all $k\ge 0$.
We write $\Paths$ for the set of plays, and for a state $s \in S$, 
we write $\Paths_s\subseteq\Paths$ 
for the set of plays that start from~$s$.
A \emph{strategy} for player~$\PA$ is a function 
$\straa$: $S^*\cdot \SA \to S$ that for all finite sequences of states 
ending in a player-1 state (the sequence represents a prefix of a play), 
chooses a successor state to extend the play.
A strategy must prescribe only available moves, that is, 
$\straa(\fplay \cdot s) \in E(s)$ for all $\fplay \in S^*$ and $s \in \SA$.
The strategies for player~2 are defined analogously.
Note that we have only pure (i.e., nonprobabilistic) strategies.
We denote by $\Straa$ and $\Strab$ the set of strategies for player~$\PA$
and player~$\PB$, respectively.
A strategy $\straa$ is {\em memoryless\/} 
if it does not depend on the history of the play but only on the current 
state.
A memoryless player-1 strategy can be represented as a function 
$\straa$: $\SA \to S$, and a memoryless player-2 strategy 
is a function $\strab$: $\SB \to S$.
We denote by $\Straa^{M}$ and $\Strab^{M}$ 
the set of memoryless strategies for player~1 and player~2, 
respectively.

Once a starting state $s \in S$ and strategies $\straa \in \Straa$ and
$\strab \in \Strab$ for the two players are fixed, the outcome of the
game is a random walk $\pat_s^{\straa, \strab}$ for which the
probabilities of events are uniquely defined, where an \emph{event}
$\Aa \subseteq \Paths$ is a measurable set of plays.  
Given strategies
$\straa$ for player~1 and $\strab$ for player~2, a play $\pat=\sseq$
is \emph{feasible} if for all $k\ge 0$, we have 
$\straa(\fseq)=s_{k+1}$ if $s_k \in \SA$, and 
$\strab(\fseq)=s_{k+1}$ if $s_k \in \SB$.
Given two strategies
$\straa\in\Straa$ and $\strab\in\Strab$, and a state~$s\in S$, we
denote by $\outcome(s,\straa,\strab) \subseteq \Paths_s$ the set of
feasible plays that start from $s$.  
Note that for deterministic game graphs, the set 
$\outcome(s,\straa,\strab)$ contains a single play.
For a state $s \in S$ and an event $\Aa\subseteq\Paths$, we
write $\Prb_s^{\straa, \strab}(\Aa)$ for the probability that a play 
belongs to $\Aa$ if the game starts from the state $s$ and the two players
follow the strategies $\straa$ and~$\strab$, respectively.

\smallskip\noindent{\bf Objectives.}
An \emph{objective} for a player is a set $\Phi \subseteq \Pi$ of 
winning plays.
We consider $\omega$-regular sets of winning plays, which are 
measurable.
For a play $\pat = \sseq$, let $\Inf(\pat)$ be the set  
$\set{s \in S \mid \mbox{$s = s_k$ for infinitely many $k \geq 0$}}$
of states that appear infinitely often in~$\pat$.
\begin{enumerate}
\item
  \emph{Reachability and safety objectives.}
  Given a set $F \subseteq S$ of states, the reachability objective 
  $\Reach{F}$ requires that some state in $F$ be visited,
  and dually, 
  the safety objective $\Safe{F}$ requires that only states in $F$ 
  be visited.
  Formally, the sets of winning plays are
  $\Reach{F}= \set{\sseq \in \Pat \mid 
  \exists k \geq 0: s_k \in F}$
  and 
  $\Safe{F}=\set{\sseq \in \Pat \mid 
  \forall k \geq 0: s_k \in F}$.
\item
  \emph{B\"uchi and co-B\"uchi objectives.}
  Given a set $F \subseteq S$ of states, the B\"uchi objective 
  $\Buchi{F}$ requires that some state in $F$ be visited
  infinitely often, and dually, 
  the co-B\"uchi objective $\coBuchi{F}$ requires that only 
  states in $F$ be visited infinitely often.
  Thus, the sets of winning plays are
  $\Buchi{F}= \set{\pat \in \Pat \mid 
  \Inf(\pat)\cap F\neq\emptyset}$
  and 
  $\coBuchi{F}=\set{\pat \in \Pat \mid 
  \Inf(\pat)\subseteq F}$.
\item \emph{Parity objectives.}
  Given a function 
  $p$: $S \to \set{0,1,2,\ldots,d-1}$ 
  that maps every state to a \emph{priority}, 
  the parity objective $\Parity{p}$ requires that of the states 
  that are visited infinitely often, the least priority be even.
  Formally, the set of winning plays is 
  $\Parity{p}=\set{\pat\in\Pat \mid 
  \min\set{p(\Inf(\pat))} \text{ is even}}$. 
  The dual, co-parity objective has the set 
  $\coParity{p}=
  \set{\pat\in\Pat \mid \min\set{p(\Inf(\pat))} \text{ is odd}}$
  of winning plays.
\end{enumerate}
The parity objectives are closed under complementation: 
given a function $p$: $S \to \set{0,1,\ldots,d-1}$, consider the 
function $p+1$: $S \to \set{1,2,\ldots,d}$ defined by 
$p+1(s)=p(s)+1$ for all $s\in S$;
then $\Parity{p+1}=\coParity{p}$.
The B\"uchi and co-B\"uchi objectives are special cases of parity 
objectives with two priorities, namely, 
$p$: $S \to \set{0,1}$ for B\"uchi objectives with $F=p^{-1}(0)$, and 
$p$: $S \to \set{1,2}$ for co-B\"uchi objectives with $F=p^{-1}(2)$.
The reachability and safety objectives can be turned into B\"uchi 
and co-B\"uchi objectives, respectively, on slightly modified game graphs.

\smallskip\noindent{\bf Sure and Almost-Sure Winning.}
Given an objective~$\Phi$, a strategy $\straa\in\Straa$ is  
\emph{sure winning} for player~1 from a state $s\in S$ if for every 
strategy $\strab\in\Strab$ for player~2, we have
$\outcome(s,\straa,\strab) \subseteq \Phi$.
The strategy $\straa$ is \emph{almost-sure winning} for player~1 
from $s$ for $\Phi$ if for every player-2 
strategy~$\strab$, we have $\Prb_s^{\straa,\strab} (\Phi) =1$.
The sure and almost-sure winning strategies for player~2 are defined
analogously.
Given an objective~$\Phi$, the \emph{sure winning set} $\was(\Phi)$ 
for player~1 is the set of states from which player~1 has a sure winning 
strategy.
Similarly, the \emph{almost-sure winning set} $\waa(\Phi)$ for player~1 
is the set of states from which player~1 has an almost-sure winning 
strategy.
The winning sets $\wbs(\Phi)$ and $\wba(\Phi)$ for player~2 are defined 
analogously.
It follows from the definitions that for all probabilistic game
graphs and all objectives~$\Phi$, we have 
$\was(\Phi) \subseteq\waa(\Phi)$.
In general the subset inclusion relation is strict.
For deterministic games the notions of sure and almost-sure winning
coincide~\cite{Mar75}, that is, for all deterministic game graphs
and all objectives $\Phi$, we have $\was(\Phi)=\waa(\Phi)$, 
and in such cases we often omit the subscript.
Given an objective~$\Phi$, the \emph{cooperative winning set} 
$\winsure{1,2}(\Phi)$ is the set of states
$s$ for which there exist a player-1 strategy $\straa$ and a player-2
strategy $\strab$ such that $\outcome(s,\straa,\strab)\subseteq\Phi$.

\begin{theorem}[Deterministic games~\cite{EJ91}]\label{thrm:sure}
For all deterministic game graphs and parity objectives $\Phi$,
the following assertions hold: 
(i)~$\was(\Phi) = S \setminus \wbs(\Pi\setminus\Phi)$;
(ii)~memoryless sure winning strategies exist for both players from
their sure winning sets; 
and
(iii)~given a state $s\in S$, whether $s \in \was(\Phi)$ can be decided in 
NP $\cap$ coNP. 
\end{theorem}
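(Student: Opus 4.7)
The plan is to derive all three assertions from one inductive argument: I would establish (ii) directly by induction on the number of priorities, deduce (i) as an immediate corollary of the partition produced by that induction, and obtain (iii) by guess-and-verify using memoryless strategies as polynomial-size witnesses.

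For (ii), I induct on the number $d$ of distinct priorities appearing in $p$. The base case $d=1$ is trivial, as every infinite play has the same parity of minimum priority. For the inductive step, let $m$ be the minimum priority used; assume without loss of generality that $m$ is even (the odd case is symmetric via complementation, which exists by the closure of parity objectives noted in the paper). Write $A = p^{-1}(m)$ and compute the player-$1$ attractor $\mathrm{Attr}_1(A)$ of $A$ using the standard monotone fixpoint over controllable-predecessor operators; this computation yields a memoryless player-$1$ strategy that, from any state in $\mathrm{Attr}_1(A)$, forces a visit to $A$ in finitely many steps. On the subgame $G'$ obtained by removing $\mathrm{Attr}_1(A)$, the priority $m$ no longer appears, so fewer priorities remain and the induction hypothesis supplies memoryless winning strategies and a partition of $G'$ into $W_1'$ and $W_2'$. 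Now form the player-$2$ attractor $\mathrm{Attr}_2(W_2')$ in $G$ and recurse on its complement. The recursion terminates because the state space strictly shrinks whenever $W_2'$ is nonempty; when $W_2' = \emptyset$, player $1$ wins the entire current subgame by alternating the memoryless attractor strategy into $A$ with the inductive memoryless strategy on $G'$, which is well-defined because every state contributes a move in exactly one of the two components. The overall memoryless strategies for both players are then assembled by taking, at each state, the move prescribed by the unique level at which that state is absorbed.

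Assertion (i) follows because the induction yields sets $\was(\Phi)$ and $\wbs(\Pi \setminus \Phi)$ that together cover $S$ and are disjoint by the soundness of the strategies. For (iii), I use (ii) as follows: an NP procedure guesses a memoryless player-$1$ strategy $\straa$ and verifies, in the one-player graph obtained by fixing $\straa$, that no cycle reachable from $s$ has an odd minimum priority; this check is polynomial via strongly-connected-component decomposition and priority scanning within each component. A symmetric coNP procedure guesses a memoryless player-$2$ strategy witnessing $s \in \wbs(\Pi \setminus \Phi)$, which by (i) is equivalent to $s \notin \was(\Phi)$.

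The main obstacle is the bookkeeping in the inductive step of (ii): one must verify that gluing the memoryless attractor strategy with the inductive memoryless strategy on the residual subgame produces a genuinely memoryless winning strategy on the full game, with no state receiving conflicting moves and no play being trapped in a region where the inductive analysis does not apply. Getting this right requires a careful argument that every infinite play either stabilizes in the residual subgame (where the inductive strategy wins) or visits $A$ infinitely often through repeated attractor phases (so the minimum infinitely occurring priority is $m$, which is even). The rest of the proof is routine once this gluing is in place.
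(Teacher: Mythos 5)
The paper offers no proof of this theorem: it is stated as a known result and attributed to \cite{EJ91} (memoryless determinacy and NP~$\cap$~coNP membership for parity games). Your argument is a correct, self-contained reconstruction of the classical McNaughton--Zielonka proof: induct on the number of priorities, peel off the minimum-priority set $A$ via a player-1 attractor when that priority is even, solve the residual subgame inductively, and either declare player~1 the winner everywhere (when $W_2'=\emptyset$, via the attractor/inductive-strategy gluing and the dichotomy ``visit $A$ infinitely often or stabilize in $G'$'') or remove the player-2 attractor of $W_2'$ and recurse on strictly fewer states. This yields the partition for (i), memoryless strategies for (ii), and the guess-and-verify certificates for (iii) exactly as in the literature, so your route is the standard one for the cited result rather than a departure from anything in the paper.

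Two steps deserve tightening. First, in the case $W_2'\neq\emptyset$ you must argue that player~2 wins from all of $\mathrm{Attr}_2(W_2')$ \emph{in the full game} $G$, not merely in the subgame $G'$: this needs the observation that $W_2'$ is a trap for player~1 in $G$ (it is a player-1 trap in $G'$, and $G'$, being the complement of a player-1 attractor, is itself a player-1 trap in $G$), so the inductive player-2 strategy remains winning when replayed in $G$. Your appeal to ``the soundness of the strategies'' elides exactly this point, which is where the gluing could silently fail; similarly, player-1 states inside $A$ itself need an (arbitrary but fixed) move assigned, since the attractor strategy only prescribes moves on $\mathrm{Attr}_1(A)\setminus A$. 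Second, the verification step in (iii) is underspecified: a single SCC decomposition with a per-component priority scan does not detect a bad cycle inside a component whose overall minimum priority is even (a component with priorities $0$ and $1$ may still contain a cycle through only the $1$-state). The standard polynomial check iterates over odd priorities $k$, restricts to states of priority at least $k$, and tests whether some nontrivial SCC reachable from $s$ contains a priority-$k$ state. Both repairs are routine, and with them your proof is complete.
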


\begin{theorem}[Probabilistic games~\cite{CJH03}]\label{thrm:almost}
Given a probabilistic game graph $G=((S,E),(\SA,\SB,\SR),\trans)$ 
and a parity objective $\Phi$ with $d$ priorities, 
we can construct a deterministic game graph 
$\wh{G}=((\wh{S},\wh{E}),(\wh{\SA},\wh{\SB}))$ with $S\subseteq\wh{S}$,
and a parity objective 
$\wh{\Phi}$ with $d+1$ priorities 
such that 
(i)~$|\wh{S}|=O(|S|\cdot d)$ and $|\wh{E}|=O(|E|\cdot d)$; and
(ii)~the set $\waa(\Phi)$ in $G$ is equal to the set $\was(\wh{\Phi}) \cap S$
in $\wh{G}$.
Moreover, memoryless almost-sure winning strategies exists for both players 
from their almost-sure winning sets in $G$.
\end{theorem}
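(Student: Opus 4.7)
The plan is to reduce the stochastic parity game $G$ to a deterministic parity game $\wh{G}$ by replacing each probabilistic state $s\in\SR$ with a small gadget that gives player~$\PB$ the choice of successor, while also allowing player~$\PA$ to ``audit'' whether all states in $\supp(\trans(s))$ are visited. The key insight is that under almost-sure winning, the identity of the probability values in $\trans$ is irrelevant; what matters is which transitions have positive probability. A standard Borel--Cantelli argument shows that if a probabilistic state is visited infinitely often along a play, then with probability~$1$ every state in its support is also visited infinitely often. So the reduction must force player~$\PB$, when picking successors at former probabilistic states, to behave as if this fairness property were guaranteed.

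Concretely, I would replace each $s\in\SR$ with a gadget in which $s$ becomes a player-$\PB$ state with edges to $\supp(\trans(s))$ as before, but I would add a fresh priority~$d$ that is \emph{odd} (i.e., one larger than the current largest priority). Using the gadget from~\cite{CJH03}, player~$\PA$ can, at each visit to $s$, either let player~$\PB$ proceed or ``challenge'' a chosen target $t\in\supp(\trans(s))$; the challenge can be cleared only by actually reaching~$t$, and uncleared challenges visit the new odd priority~$d$ cofinally, causing player~$\PA$ to lose. This gadget contributes $O(|\supp(\trans(s))|)$ new states and edges per probabilistic state, so the total blowup is $|\wh{S}|=O(|S|\cdot d)$ and $|\wh{E}|=O(|E|\cdot d)$, and the objective $\wh{\Phi}$ uses the original priorities plus one new odd priority, yielding $d+1$ priorities.

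For correctness I would argue the two inclusions $\waa(\Phi)\cap S \supseteq \was(\wh{\Phi})\cap S$ and $\waa(\Phi)\cap S \subseteq \was(\wh{\Phi})\cap S$. For the first, I take a memoryless sure winning strategy $\wh{\straa}$ for player~$\PA$ in $\wh{G}$ (which exists by Theorem~\ref{thrm:sure}), restrict it to states of $G$, and show it is almost-sure winning in $G$: any player-$\PB$ strategy $\strab$ in $G$ together with the probabilistic choices induces a distribution over plays; almost every such play is fair at probabilistic states, and such fair plays correspond to plays in $\wh{G}$ where player~$\PA$ never needs to challenge, so the parity outcome is the same and $\wh{\straa}$ wins with probability~$1$. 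For the converse, given an almost-sure winning (memoryless) strategy in $G$, I use it together with the challenge mechanism: whenever player~$\PB$ deviates by avoiding some $t\in\supp(\trans(s))$ indefinitely, player~$\PA$ challenges~$t$ cofinally often, which (by construction of the gadget) forces the odd priority $d$ to dominate and refutes player~$\PB$; otherwise the gadget enforces fairness and the lifted strategy is sure winning. Memoryless almost-sure strategies in $G$ then follow from the memoryless sure winning strategies in $\wh{G}$ by projecting back.

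The main obstacle is the precise design of the gadget so that (a) it is ``local'' (only $O(|\supp(\trans(s))|)$ extra states per probabilistic state), (b) it adds only one extra priority rather than $O(d)$, and (c) the correspondence between sure plays in $\wh{G}$ and fair plays in $G$ is exact, in the sense that the minimum infinitely-occurring priority agrees. The delicate point is ensuring that player~$\PA$'s ability to challenge does not itself distort the parity outcome when player~$\PB$ is actually playing fairly; this is handled by making challenge edges carry priority $d$, which is higher than any original priority and hence invisible when cleared, but dominant when left uncleared.
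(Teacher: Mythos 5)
The paper offers no proof of this theorem---it is quoted as a known result from~\cite{CJH03}---so your proposal can only be measured against the construction in that reference. Your guiding intuition is the right one: the exact probabilities are irrelevant, only the supports matter, and a Borel--Cantelli argument shows that a probabilistic state visited infinitely often almost surely sees every successor in its support infinitely often. But the concrete gadget you describe does not work. First, a ``challenge'' that ``can be cleared only by actually reaching $t$'' is not a local gadget: something must remember which target is pending until it is reached, and that memory has to be threaded through the \emph{entire} game graph, not just through the $O(|\supp(\trans(s))|)$ states you budget per probabilistic state. Implementing it multiplies the state space by the number of possible pending challenges, far beyond the claimed bound (and even ignoring this, your per-gadget count sums to $O(|S|+|E|)$, which is not $O(|S|\cdot d)$).

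Second, and more fundamentally, the priority bookkeeping is broken. In this paper's convention the parity winner is determined by the \emph{least} priority occurring infinitely often, so a fresh priority $d$ that is \emph{larger} than every original priority can never be the minimum over $\Inf(\pat)$: every infinite play still visits original states, hence original priorities, infinitely often. Your new priority is therefore always ``invisible,'' cleared or not, and the gadget penalizes no one. Third, even granting it teeth, the incentive points the wrong way: you stipulate that an uncleared challenge makes player~$\PA$ lose, so player~$\PB$ profits from never clearing and player~$\PA$ would never dare to challenge, and the fairness you want to impose on player~$\PB$ is never enforced. The construction of~\cite{CJH03} avoids all three problems by replacing each probabilistic state with a memoryless gadget in which the two players \emph{negotiate the priority} to be charged for passing through it: one player proposes a priority from a menu of $O(d)$ alternatives, and the other either accepts it or overrules at the cost of an adjacent priority of the opposite parity and then chooses the successor himself. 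This costs $O(d)$ states per probabilistic state (whence $O(|S|\cdot d)$), adds exactly one new priority, and its correctness proof is where the Borel--Cantelli fact you cite is actually deployed.
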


\noindent{\bf Realizability Games.} 
The realizability problem has the following game-theoretic formulation.

\begin{theorem}[Reactive synthesis~\cite{Pnueli89}]
  \label{thm:synthesis}
  Given an LTL formula (resp.\ $\omega$-automaton) $\varphi$, 
  we can construct a deterministic game graph $\gamegraph$, a state 
  $\init$ of $G$, and a parity objective $\Phi$ such that $L(\varphi)$ is 
  realizable iff $\init\in\was(\Phi)$.
\end{theorem}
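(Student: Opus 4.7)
\smallskip\noindent\textbf{Proof plan.}
The plan is to reduce realizability to solving a deterministic parity game in two stages: first translate the specification into a deterministic word automaton with a parity acceptance condition, then build a turn-based game in which player~1 plays for the system (choosing outputs) and player~2 plays for the environment (choosing inputs), with the parity objective inherited from the automaton.

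First I would handle the automaton side. If $\varphi$ is given as an LTL formula, I would convert it to a nondeterministic B\"uchi automaton $\autB$ over $\abc = 2^{\AP}$ by the standard Vardi--Wolper construction, and then apply Safra's determinization to obtain a deterministic Rabin (or equivalently parity) automaton $\autA = (Q,\qinit,\delta_\autA,p)$ with $\lang(\autA)=\lang(\varphi)$ and priority function $p\colon Q\to\{0,\dots,d-1\}$. If $\varphi$ is already given as a (deterministic) $\omega$-automaton, I would either use it directly or apply an analogous determinization step to obtain the same form. The cost of this step is well-known to be doubly exponential in the LTL case, but the statement asks only for existence of the reduction, not complexity.

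Next I would construct the game graph $\gamegraph=((S,E),(\SA,\SB))$. Take $S = (Q\times\{\bot\}) \cup (Q\times \cO)$. In a state $(q,\bot)\in\SA$, player~1 chooses an output letter $o\in\cO$ and moves to $(q,o)\in\SB$; in a state $(q,o)\in\SB$, player~2 chooses an input letter $i\in\cI$ and moves to $(\delta_\autA(q,o\cup i),\bot)\in\SA$. Set $\init=(\qinit,\bot)$, and lift the priority function to $S$ by $\Phi=\Parity{p'}$ where $p'(q,\bot)=p'(q,o)=p(q)$ (intermediate $\SB$-states can be given the worst even priority or simply the same priority as $q$, since only visits to $Q$-states matter for the $\omega$-word). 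This encodes the Moore setting; the Mealy case is handled by the standard swap described earlier in the paper.

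Finally I would prove the correspondence ``$L(\varphi)$ is realizable iff $\init\in\was(\Phi)$'' in both directions. For the forward direction, from any Moore transducer $\cT\models\varphi$ I extract a player-1 strategy $\straa$ that, in state $(q,\bot)$ reached via input history $w$, outputs $\kappa(\delta_\cT(\qinit^\cT,w))$; then every feasible play corresponds to some word $\cT(w)\in\lang(\varphi)$, so its projection onto $Q$ satisfies the parity condition. For the converse, Theorem~\ref{thrm:sure}~(ii) supplies a memoryless sure winning strategy $\straa$ on $\was(\Phi)$, and I turn $\straa$ into a Moore transducer whose states are the reachable $\SA$-states and whose outputs are dictated by $\straa$; any input word $w\in\cI^\omega$ then induces a feasible play, hence the generated $\cT(w)$ is accepted by $\autA$, i.e.\ is in $\lang(\varphi)$. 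The main obstacle is not the game construction itself but the automaton translation: ensuring determinism of $\autA$ (so that the game graph is well-defined and player~2's input choice uniquely determines the next $Q$-state) is what forces Safra-style determinization and what makes the priority lifting to a \emph{parity} objective possible; once this is in place, the remaining equivalences are syntactic bookkeeping.
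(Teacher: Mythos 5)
Your proposal follows essentially the same route as the paper: LTL to a nondeterministic B\"uchi automaton, determinization to a parity automaton (the paper uses Piterman's construction rather than Safra-to-Rabin, but this is immaterial), splitting each automaton state into a player-1 output choice followed by a player-2 input choice, and translating Moore transducers to and from memoryless winning strategies via Theorem~\ref{thrm:sure}. One small caution: assigning intermediate states ``the worst even priority'' could wrongly make plays accepting if that priority is not maximal (e.g.\ with priorities $\{0,1\}$); your alternative of reusing the priority of $q$ is the correct choice, since an intermediate state is visited infinitely often only if its source automaton state is.
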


\noindent
The deterministic game graph $G$ with parity objective $\varphi$ referred 
to in Theorem~3 is called a \emph{synthesis} game for~$\varphi$.
Starting from an LTL formula $\varphi$, we construct the synthesis game 
by first building a nondeterministic B\"uchi automaton that accepts 
$L(\varphi)$~\cite{Vardi94}.  
Then, following the algorithm of \cite{Piterm06}, we translate this 
automaton to a deterministic parity automaton that accepts $L(\varphi)$.
By splitting every state of the parity automaton w.r.t.\ inputs $I$ and 
outputs $O$, we obtain the synthesis game.  
Both steps involve an exponential blowup that is unavoidable:
for LTL formulae $\varphi$, the realizability problem is 2EXPTIME-complete
\cite{Rosner92}.

Synthesis games, by relating paths in the game graph to the 
specification $\varphi$, have the following special form.  
A \emph{synthesis game} $\game$ is a tuple 
$(\gamegraph,\init,\lambda,\Phi)$, where $\gamegraph = ((S,E),(\SA,\SB))$
is a deterministic bipartite game graph, in which player-1 and player-2 
states strictly alternate 
(i.e., $E \subseteq (S_1\times S_2) \cup (S_2\times S_1)$),
the initial state 
$\init\in \SA$ is a player-1 state, the labeling function 
$\lambda$: $S\implies \cO \cup \cI$ maps player-1 and player-2 states to 
letters in $\cI$ and $\cO$, respectively
(i.e., $\lambda(s)\in\cI$ for all $s\in\SA$, and $\lambda(s)\in\cO$ for all 
$s\in\SB$), 
and $\Phi$ is a parity objective.  
Furthermore, synthesis games are deterministic w.r.t.\ input and output 
labels, that is, for all edges $(s,s'),(s,s'') \in E$, if 
$\lambda(s')=\lambda(s'')$, then $s'=s''$.  Without loss of
generality, we assume that synthesis games are complete w.r.t.\ input
and output labels, that is, for all state $s\in \SA$ ($\SB$) and
$l\in\cO$ ($\cI$, respectively), there exists an edge $(s,s')\in E$
such that $\lambda(s')=l$. 
We define a function $\word$: $\Paths \implies \abc^\omega$ that maps 
each play to an infinite word such that 
$w_i = \lambda(\pat_{2i+1}) \cup \lambda(\pat_{2i+2})$ for all $i\ge 0$.
Note that we ignore the label of the initial state.
Given a synthesis game $\game$ for a specification formula or 
automaton $\varphi$, every Moore transducer 
$\cT=\l Q, \qinit, \delta, \kappa\r$ that satisfies $L(\varphi)$ 
represents a winning strategy $\straa$ of player~1 as follows: 
for all sequences $\fplay \in (\SA\SB)^*\cdot S_1$, let $w$ be the finite 
word such that $w_i = \lambda(\fplay_{i+1})$ for all $0 \le i < |\fplay|$;
then, if there exists an edge $\l \fplay_{|\fplay|}, s' \r \in E$ with 
$\lambda(s')=\kappa(\delta(\qinit,\odd(w)))$, 
then $\straa(\fplay)=s'$, and otherwise $\straa(\fplay)$ is arbitrary.
Conversely, every memoryless winning strategy $\straa$ of player~1
represents a Moore transducer $\cT=\l Q, \qinit, \delta, \kappa\r$ that
satisfies $L(\varphi)$ as follows: 
$Q=\SA$, $\qinit=\init$, $\kappa(q)=\lambda(\straa(q))$, and 
$\delta(q,l)=s'$ if $\lambda(s')=l$ and $(\straa(q),s')\in E$.

\section{Assumptions}
\label{sec:assumptions}

%
%

In this section, we discuss about environment assumptions in general,  
illustrating through several simple examples, and then 
identify conditions that every assumption has to satisfy.

Given a specification $\varphi$ that describes the desired behavior of
an open system $\cS$, we search for assumptions on the environment of
$\cS$ that are sufficient
to ensure that $\cS$ exists and satisfies $\varphi$.  
The assumptions we study are independent of the actual implementation.  
They are derived from the given specification and can
be seen as part of a correct specification.  We first define what it
means for an assumption to be sufficient.

\begin{mydefinition}
  Let $\varphi$ be a specification.  A language $\psi \subseteq
  \abcAP$ is a \emph{sufficient environment assumption for $\varphi$}
  if $(\abc^\omega\setminus\psi) \cup \varphi$ is realizable.
\end{mydefinition}

\begin{myexample}
  \label{ex:until}
  Consider the specification $\varphi=\out \until \inp$.  There exists
  no system $\cS$ with input $\inp$ and output $\out$ such that $\cS
  \models \varphi$, because $\cS$ cannot control the value of $\inp$
  and $\varphi$ is satisfied only if $\inp$ eventually becomes true.
  We have to weaken the specification to make it realizable.  A
  candidate $\psi$ for the assumption is $\eventually \inp$ because it
  forces the environment to assert the signal $\inp$ eventually, which
  allows the system to fulfill $\varphi$.  Further candidates are
  $\false$, which makes the specification trivially realizable,
  $\nextt\inp$, which forces the environment to assert the signal
  $\inp$ in the second step, $\eventually\out$, or $\eventually
  \neg\out$.  The last two assumptions lead to new specifications of
  the following form
  $\varphi'=\psi\implies\varphi=\eventually\out\implies \varphi =
  \always(\neg\out) \vee\varphi$.  The system can implement $\varphi'$
  independent of $\varphi$ simply by keeping $\out$ low all the time.
\end{myexample}

Example~\ref{ex:until} shows that there are several assumptions that
allow to implement the specification but not all of them are
satisfactory.  For example, the assumption $\false$ does not provide the
desired information.  Similarly, the assumption $\eventually\out$ is
not satisfactory, because it cannot be satisfied by any environment
that controls $\inp$.  Intuitively, assumptions that are false or that
can be falsified by the system correspond to a new specification
$\psi\rightarrow\varphi$ that can be satisfied vacuously
\cite{Beer97,Kupfer99b} by the system.
In order to exclude those assumptions, we require that an assumption
fulfills the following condition:

\begin{itemize}
\item[\emph{(\cond{1})}] \emph{Realizable for the environment:} The
  system cannot trivially falsify the assumption, so there exists an
  implementation of the environment that satisfies $\psi$. Formally,
  $\psi$ is Mealy realizable\footnote{Note that we ask here for a
    Mealy transducer, since the system is a Moore transducer.}.
\end{itemize}

Note that Condition~\cond{1} induces that $\varphi$ has to be
satisfiable for $\psi$ to exist.
If $\varphi$ is not satisfiable there exists only the trivially
solution $\psi=\false$.  We assume from now on that $\psi$ is
satisfiable.  Apart from Condition~\cond{1}, we ask for a condition to
compare or order different assumptions.  We aim to restrict the
environment ``as little as possible''.  An obvious candidate for this
order is language inclusion:

\begin{itemize}
\item[\emph{(\cond{2})}] \emph{Maximum:}
  An assumption $\psi$ is maximal if there exists no other sufficient
  assumption that includes $\psi$.
  There is no language $\psi'\subseteq \abcAP$ such that
  $\psi\subset\psi'$ and $(\abc^\omega\setminus \psi') \cup \varphi$
  is realizable.
\end{itemize}

The following example shows that using language inclusion we cannot
ask for a unique maximal assumption.
\begin{myexample}
  \label{ex:unique}
  Consider the specification $\varphi = (\out \until \inp_1) \vee
  (\neg\out \until \inp_2)$, where $\inp_1$ and $\inp_2$ are inputs
  and $\out$ is an output.  Again, $\varphi$ is not realizable.
  Consider the assumptions $\psi_1=\eventually\inp_1$ and
  $\psi_2=\eventually\inp_2$.  Both are sufficient because assuming
  $\psi_1$ the system can keep the signal $\out$ constantly high and
  assuming $\psi_2$ it can keep $\out$ constantly low.  However, if we
  assume the disjunction $\psi=\psi_1\vee\psi_2$, the system does not
  know, which of the signals $\inp_1$ and $\inp_2$ the environment is
  going to assert eventually.
  Since a unique maximal assumption has to subsume all other
  sufficient assumptions and $\psi$ is not sufficient, it follows that
  there exists no unique maximal assumption that is sufficient.

\end{myexample}

Let us consider another example to illustrate the difficulties that
arise when comparing environment assumptions w.r.t.\ language
inclusion.

\begin{myexample}
  \label{ex:safety}
  Assume the specification $\varphi = \always(\inp\implies\nextt\out)
  \wedge \always(\out\implies(\nextt\neg\out))$ with input signal
  $\inp$ and output signal $\out$.  The specification is not
  realizable because whenever $\inp$ is set to true in two consecutive
  steps, the system cannot produce a value for $\out$ such that
  $\varphi$ is satisfied.
  One natural assumption is $\psi=\always(\inp
  \implies\nextt\neg\inp)$.  Another assumption is $\psi'=\psi \vee
  \eventually(\neg\inp\wedge \nextt\out)$, which is weaker than $\psi$
  w.r.t.\ language inclusion and still realizable.  Looking at the
  resulting system specification $\psi'\rightarrow \varphi = (\psi
  \vee \eventually(\neg\inp\wedge \nextt\out)) \implies \varphi = \psi
  \rightarrow (\always(\neg\inp\implies\nextt\neg\out) \wedge
  \varphi)$, we see that $\psi'$ restricts the system instead of the
  environment.
\end{myexample}

Intuitively, using language inclusion as ordering notion, results in
maximal environment assumptions that allow only a single
implementation for the system.  We aim for an assumption that does not
restrict the system if possible.
One may argue that $\psi$ should talk only about input signals.
Let us consider the specification of Example~\ref{ex:safety} once more.
Another sufficient assumption is $\psi''= (\inp \implies\nextt\neg\inp)
\weakuntil (\out \wedge \nextt \out)$, which is weaker than $\psi$.
Intuitively, $\psi''$ requires that the environment guarantees $(\inp
\implies\nextt\neg\inp)$ as long as the system did not make a mistake
(by setting $\out$ to true in two consecutive steps), which clearly
means the intuition of an environment assumption.  The challenge is to
find an assumption that (a) is sufficient, (b) does not restrict the
system, and (c) gives the environment maximal freedom.

Note that the assumptions $\psi$ and $\psi''$ are safety assumptions,
while the assumptions in Example~\ref{ex:unique} are liveness
assumptions.  In general, every language can be split into a safety
and a liveness component.  We use this separation to provide a way to
compute environment assumption that fulfills our criteria.

We consider restriction on game graphs of synthesis games to find
sufficient environment assumptions.  More precisely, we propose to put
restrictions on player-2 edges, because they correspond to decisions
the environment can make.  If the given specification is satisfiable,
this choice of restrictions leads to assumptions that fulfill
the realizability for the environment.

\section{Safety Assumptions}
\label{sec:safety}

In this section, we define and compute an assumption that restricts
the safety behavior of the environment.

\subsection{Non-restrictive Safety Assumption on Games}

\begin{mydefinition}
  Given a deterministic game graph $\gamegraph=\l\l S,E\r,\l \SA,\SB\r\r$
  and the winning objective $\Phi$ for player~1.
  A \emph{safety assumption} on the set $\safenv \subseteq \EB$ of
  edges requires that player~2 chooses only edges outside from~$\safenv$. 
  A synthesis game $\game=(\gamegraph,\init,\lambda)$ with a safety
  assumption on $\safenv$ defines an environment assumption
  $\psi_\safenv$ as the set of words $w \in \abc^\omega$ such that
  there exists a play $\pi \in \Paths_{\init}$ with $w=\word(\pi)$,
  where for all $i\ge 0$, we have $(\pi_i,\pi_{i+1})\not\in\safenv$.
\end{mydefinition}

The set $\safenv$ can be seen as a set of forbidden edges of player~2.
A natural order on safety assumptions is the number of edges in a
safety assumption.  We write $\safenv\le\safenv'$ if
$|\safenv|\le|\safenv'|$ holds.
A safety assumption refers to the safety component of a winning
objective, which can be formulated as $\Phi_S =
\Safe{\winsure{1,2}(\Phi)}$.
Formally, the winning objective of player~1 is modified to
$\AssumeSafe(\safenv,\Phi)=\{\pat=\sseq \mid$ either (i) there exists
$i\ge 0$ such that $(s_i,s_{i+1}) \in \safenv$, or (ii) $\pi \in
\Phi_S\}$ denoting the set of all plays in which either one of the
edges in $\safenv$ is chosen, or that satisfies the safety component
of $\Phi$.

\begin{mydefinition}
  Given a deterministic game graph $\gamegraph=\l\l S,E\r,\l
  \SA,\SB\r\r$, a winning objective $\Phi$ for player~1, and a safety
  assumption $\safenv$, the safety assumption on $\safenv$ is
  \emph{safe-sufficient for state} $s\in S$ and $\Phi$ if player~1 has
  a winning strategy from $s$ for the objective
  $\AssumeSafe(\safenv,\Phi)$.
\end{mydefinition}

\begin{theorem}
  \label{thm:safety}
  Let $\varphi$ be a specification and let
  $\game_\varphi=(\gamegraph,\init,\lambda)$ be a synthesis game for
  $\varphi$ with the winning objective $\Phi$.
  An environment assumption $\psi_\safenv$ defined by a safety
  assumption $\safenv$ on $\game_\varphi$ that is safe-sufficient for
  $\init$ and $\Phi$, is sufficient for
  $\varphi'=\safety(\prefixes(\lang(\varphi)))$.
  Note that if $\varphi$ is a safety language, then $\psi_\safenv$ is
  sufficient for $\varphi$.
\end{theorem}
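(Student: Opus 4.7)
The plan is to construct a Moore transducer whose language is contained in $(\abc^\omega \setminus \psi_\safenv) \cup \varphi'$, which is what ``$\psi_\safenv$ sufficient for $\varphi'$'' demands. First, I would observe that $\AssumeSafe(\safenv,\Phi)$ is $\omega$-regular (a small graph augmentation that marks whether a $\safenv$-edge has been taken reduces it to a parity objective of essentially the same size), so safe-sufficiency together with Theorem~\ref{thrm:sure} yields a memoryless sure winning strategy $\straa$ for player~1 from $\init$. I would then convert $\straa$ into a Moore transducer $\cT$ using the correspondence described at the end of Section~\ref{sec:preliminaries}, and claim that $\cT$ witnesses realizability of $(\abc^\omega\setminus\psi_\safenv)\cup\varphi'$.

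To verify this, I fix an arbitrary input $u \in \cI^\omega$ and let $\pat$ be the unique play from $\init$ consistent with $\straa$ on input $u$; by construction $\word(\pat)=\cT(u)$, and since synthesis games are deterministic w.r.t.\ input and output labels, $\pat$ is the \emph{only} play from $\init$ whose word is $\cT(u)$. If some edge of $\pat$ lies in $\safenv$, then no play from $\init$ avoiding $\safenv$ produces $\cT(u)$, so $\cT(u)\notin\psi_\safenv$ and this case is done. Otherwise $\pat$ never traverses $\safenv$, and because $\straa$ wins $\AssumeSafe(\safenv,\Phi)$, the play must satisfy $\Phi_S=\Safe{\winsure{1,2}(\Phi)}$, so every $\pat_n$ is cooperatively winning.

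In this remaining case I would, for each $n$, pick cooperative strategies witnessing $\pat_n\in\winsure{1,2}(\Phi)$, extract the resulting play from $\pat_n$, and splice it onto the prefix $\pat_0\cdots\pat_{n-1}$ to produce a play $\wt\pat$ from $\init$. Because parity conditions depend only on the infinite tail, $\wt\pat\in\Phi$, and by the synthesis-game correspondence of Theorem~\ref{thm:synthesis} its word belongs to $\lang(\varphi)$; this word agrees with $\cT(u)$ up to the splice point, so every finite prefix of $\cT(u)$ lies in $\prefixes(\lang(\varphi))$, giving $\cT(u)\in\safety(\prefixes(\lang(\varphi)))=\varphi'$. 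Combining the cases proves the theorem, and the final remark is immediate since $\varphi=\varphi'$ whenever $\varphi$ is safety. The main obstacle I expect is this cooperative-extension step: the witnesses at different $n$ need not agree with each other or with $\straa$, so one must resist building a single global continuation and instead rely on the fact that prefix-extensibility into $\lang(\varphi)$ is all that is required to deduce membership in the safety closure $\varphi'$.
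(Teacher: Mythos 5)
Your proposal is correct and follows essentially the same route as the paper's proof: obtain a memoryless winning strategy for $\AssumeSafe(\safenv,\Phi)$, turn it into a Moore transducer via the synthesis-game correspondence, and split on whether the resulting play traverses a $\safenv$-edge (word outside $\psi_\safenv$) or stays in $\Safe{\winsure{1,2}(\Phi)}$ (word in the safety closure of $\lang(\varphi)$). You merely supply two justifications the paper leaves implicit --- the uniqueness of the play for a given word via label-determinism, and the prefix-extension argument showing membership in $\safety(\prefixes(\lang(\varphi)))$ --- which is a welcome elaboration rather than a different approach.
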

\begin{proof}
  Since $\safenv$ is safe-sufficient for $\init$ and $\Phi$, player~1
  has a memoryless winning strategy $\straa$ for $\init$ and
  $\AssumeSafe(\safenv,\Phi)$.  We know from
  Theorem~\ref{thm:synthesis} that $\straa$ corresponds to a
  transducer $\cT$.
  We need to show that the language of the transducer $\lang(\cT)$ is
  a subset of the new specification $(\abc\setminus\psi_\safenv)
  \cup\safety(\prefixes(\lang(\varphi)))$.  A run of $\cT$ on a word
  $w \in \cI^\omega$ corresponds to a winning play $\pat \in
  \AssumeSafe(\safenv,\Phi)$ of $\game_\varphi$.
  A play $\pi'=s_0 s_1\dots \in\AssumeSafe(\safenv,\Phi)$ either has
  an edges $(s_i,s_{i+1})\in \safenv$, then $\word(\pi') \in
  (\abc\setminus\psi_\safenv)$, or $\pi'\in
  \Safe{\winsure{1,2}(\Phi)}$, then we have that $\word(\pi') \in$
  $\safety(\prefixes(\lang(\varphi)))$.
\qed
\end{proof}

In the following example, we show that there exist safety games such
that for some state $s$ there is no unique smallest assumption that is
safe-sufficient for $s$.

\begin{figure}[bt]
  \begin{minipage}[t]{0.48\textwidth}
    \centerline{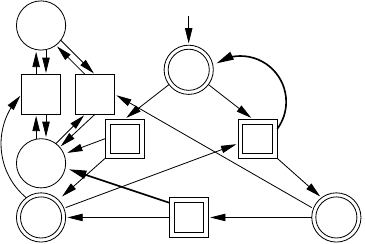} 
    \caption{Game with two equally small safe-sufficient assumptions
      for $s_1$: $\safenv=\set{(s_3,s_1)}$ and
      $\safenv'=\set{(s_5,s_7)}$.}
    \label{fig:unique}
  \end{minipage}
  \hspace{0.04\textwidth}
  \begin{minipage}[t]{0.48\textwidth}
    \centerline{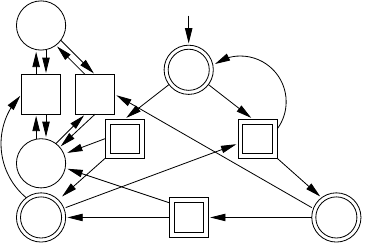} 
    \caption{Synthesis game for $\always(\inp\implies\nextt\out)
      \wedge\always(\out\implies\nextt\neg\out)$.}
    \label{fig:unique_phi}
  \end{minipage}
\end{figure}

\begin{myexample}
  Consider the game shown in Figure~\ref{fig:unique}.  Circles denote
  states of player~1, boxes denote states of player~2.  The winning
  objective of player~1 is to stay in the set
  $\set{s_1,s_2,s_3,s_4,s_5,s_6}$ denoted by double lines.  Player~1
  has no winning strategy for $s_1$.  There are two equally small
  safety assumptions that are safe-sufficient for $s_1$:
  $\safenv=\set{(s_3,s_1)}$ and $\safenv'=\set{(s_5,s_7)}$.  In both
  cases, player~1 has a winning strategy from state $s_1$.

  If we consider a specification, where the corresponding synthesis
  game has this structure, neither of these assumptions is
  satisfactory.
  Figure~\ref{fig:unique_phi} shows such a synthesis game for the
  specification $\always(\inp\implies\nextt\out)$ $\wedge$
  $\always(\out\implies\nextt\neg\out)$ with input signal $\inp$ and
  output signal $\out$ (cf.\ Example~\ref{ex:safety}).  Assuming the
  safety assumption $\safenv$, the corrected specification would allow
  only the single implementation, where $\out$ is constantly keep low.
  The second assumption $\safenv'$ leads to a corrected specification
  that additionally enforces
  $\always(\neg\inp\implies\nextt\neg\out)$.
\end{myexample}

Besides safe-sufficient, we also ask for an assumption that does not
restrict player~1.  This condition can be formulated as follows.
  Given a deterministic game graph $\gamegraph=\l\l S,E\r,\l
  \SA,\SB\r\r$, a winning objective $\Phi$ for player~1, and a safety
  assumption $\safenv$.  We call the safety assumption on $\safenv$
  \emph{restrictive} for state $s\in S$ and $\Phi$, if there exist
  strategies $\straa \in\Straa$ and $\strab\in\Strab$ of player~1
  and~2, respectively such that the play $\outcome(s,\straa,\strab)$
  contains an edge from $\safenv$ and is in $\Phi_S$.
  A non-restrictive safety assumption should allow any edge that does
  not lead to an immediate violation of the safety component of the
  winning objective of player~1.
\begin{theorem}
  \label{thm:minimal}
  Given a deterministic game graph $\gamegraph=\l\l S,E\r,\l \SA,\SB\r\r$,
  a winning objective $\Phi$ for player~1, and a state $s\in S$,
  if $s \in\winsure{1,2}(\Phi)$, then there exists a unique minimal
  safety assumption $\safenv$ that is non-restrictive and
  safe-sufficient for state $s$ and $\Phi$.
  Let $s \in\winsure{1,2}(\Phi)$ and let $\safenv$ be this unique
  minimal safety assumption for $s$ and $\Phi$, then player~2 has
  winning strategy for $s$ and the objective to avoid the edges in
  $\safenv$.
\end{theorem}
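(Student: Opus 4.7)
My plan is to identify the required assumption explicitly with the player-2 edges that leave the cooperative winning region. Let $W = \winsure{1,2}(\Phi)$ and put
\[
\safenv^{*} = \{(u,v) \in \EB : u \in W \text{ and } v \notin W\}.
\]
Since $s \in W$ by hypothesis, $W$ is nonempty. I would first check that $\safenv^{*}$ satisfies both defining properties and the second clause of the theorem, and then address minimality and uniqueness.

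For safe-sufficiency, let player~1 play a memoryless strategy that at every state $u \in W \cap \SA$ picks a successor inside $W$; such a successor exists by the definition of $W$ together with the tail-closure of parity objectives. Against any player-2 response, the outcome from $s$ either remains in $W$ forever and so lies in $\Phi_S = \Safe{W}$, or first exits $W$ along a player-2 edge $(u,v)$ with $u \in W$ and $v \notin W$, i.e.\ along an edge of $\safenv^{*}$. Either way the play is in $\AssumeSafe(\safenv^{*},\Phi)$. Non-restrictiveness is immediate: every edge in $\safenv^{*}$ has its target outside $W$, so any play using such an edge leaves $\Phi_S$. For the second clause, by symmetry player~2 can, at every $u \in W \cap \SB$, pick a successor in $W$; the resulting play against any player-1 strategy never traverses an edge of $\safenv^{*}$, and player-1 moves are automatically safe since $\safenv^{*} \subseteq \EB$.

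The crux is uniqueness of the minimal non-restrictive, safe-sufficient assumption. I would first use non-restrictiveness to rule out any edge $(u,v) \in \EB$ with $u,v \in W$: splicing a cooperative $\Phi$-winning play from $s$ reaching $u$, the edge $(u,v)$, and a cooperative $\Phi$-winning play from $v$ produces a play in $\Phi \subseteq \Phi_S$ that uses $(u,v)$, so $(u,v)$ cannot lie in any non-restrictive $\safenv$. Hence $\safenv \subseteq \safenv^{*} \cup \{(u,v) \in \EB : u \notin W\}$. The main obstacle is the converse inclusion: arguing that safe-sufficiency together with the second-clause avoidance requirement forces $\safenv$ to contain every edge of $\safenv^{*}$. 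This is delicate because, a priori, different safe-sufficient assumptions correspond to different player-1 strategies that avoid certain player-2 states altogether, which could yield incomparable minimal sets; the plan is to leverage the fact that player~2 must still have a winning avoidance strategy to canonicalise the choice. Concretely, if $(u,v) \in \safenv^{*} \setminus \safenv$, then any player-2 avoidance strategy must refuse $(u,v)$, while player~1 can still cooperatively drive the play through $u$ along $W$; the resulting subgame must then be shown to contradict safe-sufficiency, pinning $\safenv$ down to exactly $\safenv^{*}$.
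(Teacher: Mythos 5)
Your candidate set $\safenv^{*}$ is exactly the one the paper itself settles on (it is the set given explicitly in Theorem~\ref{thm:compute_safety}; the paper offers no written proof of Theorem~\ref{thm:minimal}), and your verification of the three positive properties is sound: safe-sufficiency and the player-2 avoidance strategy both rest correctly on the fact that every state of $W=\winsure{1,2}(\Phi)$ has a successor in $W$ (by prefix-independence of $\Phi$), and non-restrictiveness of $\safenv^{*}$ is immediate since every edge of $\safenv^{*}$ exits $W$. Two problems remain, both in the part you yourself flag as the crux. A minor one: your splicing argument needs $u$ to be reachable from $s$ \emph{inside} $W$; an edge $(u,v)$ with $u,v\in W$ but $u$ not so reachable is vacuously non-restrictive, so your upper bound on $\safenv$ is not established as stated (though this direction is not where the theorem's content lies).

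The serious gap is the converse inclusion $\safenv^{*}\subseteq\safenv$, which you announce but do not prove, and which is in fact false under the paper's formal definition of restrictiveness. Take $\SA=\set{s,v}$, $\SB=\set{u,t}$, edges $s\to u$, $u\to s$, $u\to v$, $v\to t$, $t\to t$, and $\Phi=\Buchi{\set{s}}$; then $W=\set{s,u}$ and $\safenv^{*}=\set{(u,v)}$, but $\safenv_2=\set{(t,t)}$ is also safe-sufficient for $s$ (every play that leaves $W$ is funnelled into the self-loop at $t$) and is non-restrictive in the formal sense, because any play through $(t,t)$ has already left $W$ and so lies outside $\Phi_S$. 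These are two incomparable assumptions of the same cardinality, so the strategy of ``pinning $\safenv$ down to exactly $\safenv^{*}$'' cannot succeed from the stated definitions alone. To close the argument you must first confine the candidate assumptions to edges whose source lies in $W$ (this is the reading suggested by the paper's informal gloss that a non-restrictive assumption may forbid only edges causing an \emph{immediate} violation of $\Phi_S$), and only then show that, among subsets of $\safenv^{*}$, dropping any edge whose source is reachable from $s$ under cooperative safe play destroys safe-sufficiency. You should also fix which order ``minimal'' refers to, since the paper's declared order on safety assumptions is by cardinality ($\safenv\le\safenv'$ iff $|\safenv|\le|\safenv'|$) while your uniqueness argument implicitly works with set inclusion.
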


Applying this theorem to environment assumptions, we get the following
theorem.

\begin{theorem}
  \label{thm:env-realizable}
  Let $\varphi$ be a satisfiable specification and let
  $\game_\varphi=(\gamegraph,\init,\lambda)$ be a synthesis game for
  $\varphi$ with winning objective $\Phi$, then there exists a unique
  minimal safety assumption $\safenv$ that is non-restrictive and
  safe-sufficient for state $s$ and $\Phi$ and the corresponding
  environment assumption $\psi_\safenv$ is realizable for the
  environment.
\end{theorem}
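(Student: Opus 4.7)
The plan is to decompose the statement into two parts: (a)~existence of the unique minimal non-restrictive safe-sufficient assumption $\safenv$, and (b)~Mealy realizability of the induced environment assumption $\psi_\safenv$.

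First I would dispatch part~(a) by invoking Theorem~\ref{thm:minimal} at $\init$. Its hypothesis is $\init\in\winsure{1,2}(\Phi)$, and this follows from satisfiability of $\varphi$: picking any $w\in\lang(\varphi)$, the synthesis-game construction of Theorem~\ref{thm:synthesis} (combined with the determinism and completeness of synthesis games with respect to input and output labels) produces a play $\pi\in\Pat_{\init}$ with $\word(\pi)=w$ and $\pi\in\Phi$. Such a play can be generated by cooperating player-1 and player-2 strategies, so $\init$ lies in the cooperative winning set. Theorem~\ref{thm:minimal} then supplies the unique minimal non-restrictive safe-sufficient $\safenv$, together with the crucial by-product that player~2 has a winning strategy from $\init$ for the safety objective ``avoid every edge in $\safenv$''.

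For part~(b), I would convert that player-2 winning strategy into a Mealy transducer for the environment by mirroring the memoryless-strategy-to-Moore-transducer recipe given after Theorem~\ref{thm:synthesis}. Since ``avoid $\safenv$'' is a safety objective, Theorem~\ref{thrm:sure} provides a memoryless winning strategy $\strab^\ast\colon\SB\to\SA$ from $\init$. I would define a Mealy transducer $\cT_E$ with state set $\SA$, initial state $\init$, input alphabet $\cO$, and output alphabet $\cI$ as follows: at state $q\in\SA$ on input $l\in\cO$, pick the unique $s'\in\SB$ with $\lambda(s')=l$ and $(q,s')\in E$ (existence and uniqueness follow from completeness and label-determinism of synthesis games), then set the next state to $\strab^\ast(s')\in\SA$ and the output to $\lambda(\strab^\ast(s'))\in\cI$.

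Finally I would verify $\lang(\cT_E)\subseteq\psi_\safenv$. Any run of $\cT_E$ on $w\in\cO^\omega$ traces out a play $\pi\in\Pat_{\init}$ whose player-2 moves all agree with $\strab^\ast$; because $\strab^\ast$ wins the ``avoid $\safenv$'' objective, no edge of $\pi$ lies in $\safenv$, so $\word(\pi)=\cT_E(w)\in\psi_\safenv$ by the definition of $\psi_\safenv$. Thus $\cT_E$ witnesses Mealy realizability of $\psi_\safenv$, which is exactly Condition~(\cond{1}) from Section~\ref{sec:assumptions}. The only non-routine point I expect is transporting the strategy-to-transducer dictionary from the player-1 Moore case to the player-2 Mealy case while respecting the alternation and labeling conventions of synthesis games; the key enabling fact is the label-determinism of $\lambda$, which is what makes the transition function of $\cT_E$ well defined.
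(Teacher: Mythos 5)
Your proposal is correct and follows essentially the route the paper intends: the paper derives Theorem~\ref{thm:env-realizable} directly by ``applying'' Theorem~\ref{thm:minimal} (using satisfiability to get $\init\in\winsure{1,2}(\Phi)$) and by turning the player-2 winning strategy for avoiding $\safenv$ into an environment Mealy transducer. You have merely made explicit the strategy-to-Mealy-transducer construction and the containment $\lang(\cT_E)\subseteq\psi_\safenv$, which the paper leaves implicit.
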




\subsection{Computing Non-restrictive Safety Assumptions}

Given a deterministic game graph $\gamegraph$ and a winning objective $\Phi$, we
compute a non-restrictive safety assumption $\safenv$ as follows:
first, we compute the set $\winsure{1,2}(\Phi)$.  Note that for this set 
the players cooperate.  We can compute $\winsure{1,2}(\Phi)$ in polynomial
time for all objectives we consider. In particular, if $\Phi$ is a
parity condition, $\winsure{1,2}(\Phi)$ can be computed by reduction
to B\"uchi \cite{King01}.  The safety assumption $\safenv$ is the set
of all player-2 edges $(s,t)\in \EB$ such that $s\in
\winsure{1,2}(\Phi)$ and $t\not\in\winsure{1,2}(\Phi)$.

\begin{theorem}
  \label{thm:compute_safety}
  Consider  a deterministic game graph $\gamegraph$ with a winning objective $\Phi$.  
  The safety assumption $\safenv=\set{(s,t) \in \EB \mid s\in
  \winsure{1,2}(\Phi) \text{ and } t\not\in\winsure{1,2}(\Phi)}$  
  is the unique minimal safety assumption that is non-restrictive 
  and safe-sufficient for all states $s \in \winsure{1,2}(\Phi)$.
  The set $\safenv$ can be computed in polynomial time for all
  parity objectives $\Phi$.
\end{theorem}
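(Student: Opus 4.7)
The plan is to establish four facts: $\safenv$ is safe-sufficient from every state in $\winsure{1,2}(\Phi)$; $\safenv$ is non-restrictive; any non-restrictive, safe-sufficient assumption equals $\safenv$; and $\safenv$ is computable in polynomial time.

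For safe-sufficiency, I would exhibit a player-1 strategy $\alpha$ that, at every player-1 state $s\in\winsure{1,2}(\Phi)$, chooses an edge to a successor still in $\winsure{1,2}(\Phi)$; such a successor exists because any cooperative $\Phi$-play witnessing $s\in\winsure{1,2}(\Phi)$ has a tail from its next state that still satisfies the prefix-independent $\Phi$. Under $\alpha$, a play from a state in $\winsure{1,2}(\Phi)$ either sees player~2 take an edge of $\safenv$, satisfying clause~(i) of $\AssumeSafe(\safenv,\Phi)$, or remains in $\winsure{1,2}(\Phi)$ forever, because by construction $\safenv$ contains exactly the player-2 edges leaving the region; in the second case clause~(ii) holds via $\Phi_S=\Safe{\winsure{1,2}(\Phi)}$. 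Non-restrictiveness is immediate: any play using an edge of $\safenv$ visits a state outside $\winsure{1,2}(\Phi)$ and so falls out of $\Phi_S$.

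For uniqueness and minimality, the central structural fact is that $S\setminus\winsure{1,2}(\Phi)$ is a trap of the game graph: a play from $t\notin\winsure{1,2}(\Phi)$ that reached some $s'\in\winsure{1,2}(\Phi)$ could be concatenated with a cooperative $\Phi$-play from $s'$, witnessing $t\in\winsure{1,2}(\Phi)$. I would then prove two inclusions for any non-restrictive, safe-sufficient $\safenv'$. First, $\safenv'\subseteq\safenv$: any player-2 edge with both endpoints in $\winsure{1,2}(\Phi)$ can be prefixed to a cooperative $\Phi$-play, giving a $\Phi_S$-play through that edge and hence violating non-restrictiveness; and any edge with source outside $\winsure{1,2}(\Phi)$ is ruled out by the paper's informal reading of non-restrictive as allowing every edge that does not immediately violate the safety component. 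Second, $\safenv\subseteq\safenv'$: if some $(s,t)\in\safenv\setminus\safenv'$, then from $s$ player~2 plays $(s,t)$ into the trap; since $\safenv'\subseteq\safenv$ has all its sources inside $\winsure{1,2}(\Phi)$, no $\safenv'$-edge is ever reachable afterwards, so neither clause of $\AssumeSafe(\safenv',\Phi)$ can ever hold, contradicting safe-sufficiency from $s$. Together the two inclusions give $\safenv'=\safenv$, yielding both uniqueness and cardinality-minimality.

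I expect the main technical delicacy to be the first inclusion, since pinning down non-restrictiveness in the paper's intended sense is what forces $\safenv'\subseteq\safenv$; the trap property and the second inclusion are then standard. Polynomial-time computability is routine: $\winsure{1,2}(\Phi)$ for parity $\Phi$ is computable in polynomial time via the reduction to B\"uchi of~\cite{King01}, after which $\safenv$ is extracted by a single linear pass over $E_2$.
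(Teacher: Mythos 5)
The paper states Theorem~\ref{thm:compute_safety} without any proof, so there is no official argument to compare against; judged on its own, your proposal is sound and is almost certainly the intended argument. The three pillars are all correct: safe-sufficiency via a player-1 strategy that never leaves $\winsure{1,2}(\Phi)$, so that any escape from the region must cross a player-2 edge of $\safenv$ (clause~(i)) and otherwise the play satisfies $\Safe{\winsure{1,2}(\Phi)}$ (clause~(ii)); non-restrictiveness because every $\safenv$-edge immediately exits the region and hence lies on no $\Phi_S$-play; and the trap property of $S\setminus\winsure{1,2}(\Phi)$ driving the inclusion $\safenv\subseteq\safenv'$. (Both the existence of a successor inside the region and the trap property rely on prefix-independence of $\Phi$, which holds for parity and for reachability/safety only after the paper's standard conversion to B\"uchi/co-B\"uchi --- worth stating explicitly.) The one genuine soft spot is exactly the one you flag: the paper's formal definition of ``restrictive'' only excludes edges that lie on some $\Phi_S$-play from $s$, so it does not literally forbid a non-restrictive assumption from containing player-2 edges whose source is outside $\winsure{1,2}(\Phi)$ (or region-internal edges unreachable within the region); under that literal reading, $\safenv$ augmented by such an edge is still non-restrictive and safe-sufficient, so $\safenv'\subseteq\safenv$ fails and ``unique minimal'' can only mean unique among cardinality-minimal assumptions --- and even then your second inclusion needs care, since it assumes all $\safenv'$-sources lie in the region. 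This is a looseness in the paper's definitions rather than a flaw in your reasoning, and your proof is correct once non-restrictiveness is read, as the paper's informal gloss intends, as permitting every edge except those that immediately violate the safety component.
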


For the game show in Figure~\ref{fig:unique}, we obtain the safety
assumption $\safenv=\set{(s_3,s_1),(s_5,s_7)}$.
For the corresponding synthesis game in Figure~\ref{fig:unique_phi},
$\safenv$ defines the environment assumption $\psi_\safenv=$
$(\neg\inp\vee\neg\out)\weakuntil 
((\neg\inp\vee\neg\out) \wedge 
(\inp \wedge \nextt(\neg\out)) \wedge 
(\out\wedge \nextt(\out)))$.
This safety assumption meets our intuition of a minimal environment
assumption, since it states that the environment has to ensure that
either $\inp$ or $\out$ is low as long as the system makes no obvious
fault by either violating $\always(\inp\implies\nextt\out)$ or
$\always(\out\implies\nextt\neg\out)$.

\section{Liveness Assumptions}
\label{sec:liveness}

\subsection{Strongly Fair Assumptions on Games}

\begin{mydefinition}
  Given a deterministic game graph $G=((S,E),(\SA,\SB))$ and the
  objective $\Phi$ for player~1,
  a \emph{strongly fair} assumption is a set $\livenv \subseteq \EB$ of
  edges requiring that player~2 plays in a way such that if a state $s
  \in \Source(\livenv)$ is visited infinitely often, then for all $t
  \in S$ such that $(s,t) \in \livenv$, the edge $(s,t)$ is chosen
  infinitely often.
\end{mydefinition}
  

This notion is formalized by modifying the objective of player~1 as
follows.  Let $\AssumeLive(\livenv,\Phi) = \{ \pat=\sseq \mid \text{
  either (i) } \exists(s,t) \in \livenv$, such that $s_k=s$ for
infinitely many $k$'s and $s_j=t$ for finitely many $j$'s, or (ii)
$\pat \in \Phi \}$ denote the set of paths $\pat$ such that either
(i)~there is a state $s\in \Source(\livenv)$ that appears infinitely
often in $\pat$ and there is a $(s,t)\in \livenv$ and $t$ appears only
finitely often in $\pat$, or (ii)~$\pat$ belongs to the objective
$\Phi$.  In other words, part (i) specifies that the strong fairness
assumption on $\livenv$ is violated, and part (ii) specifies that
$\Phi$ is satisfied.  The property that player~1 can ensure $\Phi$
against player-2 strategies respecting the strongly fair assumption on
edges is formalized by requiring that player~1 can satisfy
$\AssumeLive(\livenv,\Phi)$ against all player-2 strategies.

Given a deterministic game graph $G=((S,E),(\SA,\SB))$ and the
objective $\Phi$ for player~1, a strongly fair assumption on $\livenv
\subseteq \EB$ is \emph{sufficient} for state $s\in S$ and $\Phi$, if
player~1 has a winning strategy for $s$ for the objective
$\AssumeLive(\livenv,\Phi)$.
Furthermore, given a deterministic game graph $G=((S,E),(\SA,\SB))$
and the objective $\Phi$ for player~1, a state $s\in S$ is \emph{live
  for player~1}, if she has a winning strategy from $s$ for the
winning objective $\Safe{\winsure{1,2}(\Phi)}$.

\begin{theorem}
  \label{thm:complete}
  Given a deterministic game graph $G=((S,E),(\SA,\SB))$ and a
  Reachability, Safety, or B\"uchi objective $\Phi$.  If $s\in S$ is
  live for player~1, then there exists a strongly fair assumption
  $\livenv \subseteq \EB$ that is \emph{sufficient} for state $s\in S$
  and $\Phi$.
\end{theorem}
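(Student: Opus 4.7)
I would handle the three objective types separately: safety is essentially immediate from the definition of live, and reachability and B\"uchi are both handled by a single cooperative--rank argument.

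\emph{Safety case.} For $\Phi = \Safe{F}$, the plan is to take $\livenv = \emptyset$ and use the live strategy directly. The only additional observation needed is $\winsure{1,2}(\Safe{F}) \subseteq F$, so any play that remains inside $W = \winsure{1,2}(\Phi)$ automatically satisfies $\Phi$.

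\emph{Reachability and B\"uchi.} Write $\Phi$ as $\Reach{F}$ or $\Buchi{F}$, let $W = \winsure{1,2}(\Phi)$, and work in the game after the safety reduction of Section~\ref{sec:safety} has been applied, so every $(t,t')\in \EB$ with $t \in W$ satisfies $t' \in W$. I would then choose $\livenv = \EB \cap (W \times W)$. The main construction is the cooperative-distance rank $\rho \colon W \to \mathbb{N}$ defined by $\rho(t) = 0$ for $t \in F \cap W$ and $\rho(t) = 1 + \min_{(t,t') \in E,\, t' \in W} \rho(t')$ otherwise. Finiteness of $\rho$ on $W$ follows from the defining property of $W$: every $t \in W$ admits a cooperative play staying in $W$ that reaches $F$ (infinitely often, for B\"uchi; once suffices for reachability). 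Player~1's strategy $\sigma$ picks, at each $t \in \SA \cap W$ with $\rho(t) > 0$, a successor attaining $\rho(t)-1$, and acts arbitrarily (within $W$) once $t \in F$.

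The heart of the argument is a contradiction argument on a minimum-rank recurrent state. Fix a play $\pi$ consistent with $\sigma$ and respecting fairness on $\livenv$, and assume toward a contradiction that $\pi$ falsifies $\Phi$. In both sub-cases $I = \Inf(\pi)$ satisfies $I \cap F = \emptyset$, hence $\rho \geq 1$ on $I$; pick $t_0 \in I$ of minimum rank $m$. If $t_0 \in \SA$, then $\sigma(t_0)$ is the successor of $t_0$ at every visit, so $\sigma(t_0) \in I$ with rank $m-1$, contradicting minimality. If $t_0 \in \SB$, then $t_0$ is a source of $\livenv$ with all its outgoing edges in $\livenv$, so fairness forces every successor into $I$; each such successor has rank $\geq m$, whence $\rho(t_0) = 1 + \min \geq m+1 > m$, again a contradiction.

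\emph{Main obstacle.} The delicate step I expect to be hardest is the reduction to a game in which player~2 cannot leave $W$. In the original graph player~2 can, in principle, escape $W$ along an edge outside $\livenv$ into a dead-end, visit the source of any fair edge only finitely often, and thereby vacuously satisfy fairness while still falsifying $\Phi$. Either composing with the safety assumption of Section~\ref{sec:safety} (as will be done in Section~\ref{sec:combining}) or absorbing the escape edges into $\livenv$ so that clause~(i) of $\AssumeLive$ is triggered is what makes the rank argument above go through uniformly for both reachability and B\"uchi; the subsequent min-rank calculation is then routine. Extending the approach to parity objectives (outside the scope of this theorem) would require replacing $\rho$ with a parity-indexed progress measure, which is what drives the reduction to probabilistic parity games used later in the paper.
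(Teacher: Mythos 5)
The paper states Theorem~\ref{thm:complete} without any proof, so there is nothing to compare your argument against line by line; judged on its own terms, your core construction is sound, but one of the two escape hatches you offer for your self-identified ``main obstacle'' does not work, and the obstacle is not merely technical. The correct parts first: the safety case via $\winsure{1,2}(\Safe{F})\subseteq F$ and $\livenv=\emptyset$ is fine, and the cooperative-distance rank $\rho$ on $W=\winsure{1,2}(\Phi)$ together with the minimum-rank-in-$\Inf(\pi)$ contradiction (player-1 states decrease rank by the memoryless choice; player-2 states in $\Source(\livenv)$ are forced by clause~(i) of $\AssumeLive$ to put all successors into $\Inf(\pi)$) is a complete argument for reachability and B\"uchi, \emph{provided} every player-2 edge leaving a state of $W$ stays in $W$. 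Your first fix --- running the argument in the game already modified by the safety assumption $\safenv$ of Section~\ref{sec:safety} --- secures exactly this property and matches how the theorem is actually invoked in Section~\ref{sec:combining}.

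Your second fix --- ``absorbing the escape edges into $\livenv$ so that clause~(i) of $\AssumeLive$ is triggered'' --- fails, because clause~(i) is a strong-fairness (Streett-type) condition: it fires only when the \emph{source} of a $\livenv$-edge recurs infinitely often while the target does not. A play that takes an escape edge once and never returns to its source satisfies the fairness assumption vacuously. Concretely, take $\SA=\set{s,f,d}$, $\SB=\set{u,a}$, edges $s\to u$, $s\to a$, $u\to s$, $a\to f$, $a\to d$, $f\to f$, $d\to d$, and $\Phi=\Buchi{\set{f}}$. Then $\winsure{1,2}(\Phi)=\set{s,u,a,f}$ and $s$ is live (player~1 loops through $s,u$), yet \emph{no} $\livenv\subseteq\EB$ is sufficient for $s$: if player~1 never moves to $a$, the play $(s\,u)^\omega$ is fair and misses $f$; if player~1 ever moves to $a$, player~2 moves to $d$, after which no $\livenv$-source is visited infinitely often and clause~(i) never fires. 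So the statement is false on the raw game graph, and your proof must commit to the first fix --- i.e., prove the theorem for the safety-relaxed objective $\AssumeSafe(\safenv,\Phi)$, or add the hypothesis that no $\EB$-edge leaves $\winsure{1,2}(\Phi)$. (The caveat ``if there exists some sufficient assumption $\psi\ne\emptyset$'' in Theorem~\ref{thm:liveness} is the paper's own tacit acknowledgement of this issue.) With that commitment made explicit, the rest of your argument goes through.
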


\begin{mydefinition}
  A synthesis game $\game=(\gamegraph,\init,\lambda)$ with a strongly
  fair assumption on $\livenv$ defines an environment assumption
  $\psi_\livenv$ as the set of words $w\in\abc^\omega$ such that there
  exists a play $\pat\in\Paths_{\init}$ with $w=\word(\pi)$ and for
  all edges $(s,t) \in \livenv$ either there exists $i\ge 0$ s.t.\ for
  all $j>i$ we have $\pi_i \ne s$, or there exist infinitely many $k$'s such
  that $\pi_k=s$ and $\pi_{k+1}=t$.  Note that this definition and the
  structure of synthesis games ensure that $\psi_\livenv$ is
  realizable.
\end{mydefinition}
These definitions together with Theorem~\ref{thm:synthesis}
and~\ref{thm:complete} lead to the following theorem.

\begin{theorem}
  \label{thm:liveness}
  Let $\varphi$ be a specification and $\game=(\gamegraph,\init,\lambda)$ a
  synthesis game for $\varphi$ with winning objective $\Phi$.
  If a strongly fair assumption on $\livenv$ is sufficient for $\init$
  and $\Phi$, then the environment assumption $\psi_\livenv$ is
  sufficient for $\varphi$ and realizable for the
  environment.  Furthermore, if $\Phi$ is a Reachability, Safety, or
  B\"uchi objective and $\init$ is live for player~1, then if there
  exists some sufficient assumption $\psi\ne\emptyset$, then there
  exists a strongly fair assumption that is sufficient.
\end{theorem}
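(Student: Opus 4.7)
The plan is to handle the statement in three pieces: the sufficiency of $\psi_\livenv$ for $\varphi$, the environment-realizability of $\psi_\livenv$, and the completeness claim for reachability/safety/B\"uchi objectives. The first two give the forward implication; the third is an immediate reduction to Theorem~\ref{thm:complete}.

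For sufficiency I will mirror the proof of Theorem~\ref{thm:safety}. The hypothesis gives a winning strategy $\straa$ for player~1 from $\init$ for the objective $\AssumeLive(\livenv,\Phi)$, which by Theorem~\ref{thm:synthesis} corresponds to a Moore transducer $\cT$ whose runs realize plays in $\AssumeLive(\livenv,\Phi)$. For any word $w\in\lang(\cT)$ with corresponding play $\pat$ either (i) some $(s,t)\in\livenv$ has $s$ visited infinitely often but $t$ only finitely often, in which case the definition of $\psi_\livenv$ forces $\word(\pat)\notin\psi_\livenv$; or (ii) $\pat\in\Phi$, giving $\word(\pat)\in\lang(\varphi)$. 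In both cases the trace lies in $(\abc^\omega\setminus\psi_\livenv)\cup\lang(\varphi)$, so $\cT\models(\abc^\omega\setminus\psi_\livenv)\cup\lang(\varphi)$ and $\psi_\livenv$ is a sufficient environment assumption.

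For environment-realizability I construct an explicit finite-state Mealy transducer $\cT'$ with input alphabet $\cO$ and output alphabet $\cI$ simulating player-2 with a round-robin scheduler. The states of $\cT'$ track the current game position together with, for every $s\in\Source(\livenv)$, a cyclic pointer through its outgoing $\livenv$-edges. On reading a system output $o\in\cO$ from a player-1 state $q$, the determinism and completeness of the synthesis game supply a unique player-2 successor $s$ with $\lambda(s)=o$; if $s\in\Source(\livenv)$ then $\cT'$ follows the edge indicated by the pointer at $s$ and advances it, otherwise it follows any fixed outgoing edge of $s$, and emits the input label of the resulting player-1 state. Because the pointer cycles through all outgoing $\livenv$-edges of $s$, every such edge is taken infinitely often whenever $s$ is visited infinitely often, so every word in $\lang(\cT')$ satisfies the fairness condition defining $\psi_\livenv$, and $\cT'$ is a Mealy realization of $\psi_\livenv$.

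For the completeness clause the hypotheses are exactly those of Theorem~\ref{thm:complete}, which produces a strongly fair assumption $\livenv\subseteq\EB$ that is sufficient for $\init$ and $\Phi$; the forward direction then converts it into the desired sufficient environment assumption $\psi_\livenv$ (the non-emptiness of some prior sufficient $\psi$ guarantees that $\psi_\livenv$ is itself non-trivial). The main delicate point is the bookkeeping in the Mealy construction: I must argue carefully that the word generated by $\cT'$ on a system-output stream coincides with $\word(\pat)$ for the play $\pat$ that $\cT'$ induces in $\game$, so that the path-level fairness translates verbatim into membership in $\psi_\livenv$. The rest is a direct combination of the assumed results, so this correspondence between transducer runs, plays in the bipartite synthesis game, and traces in $\abc^\omega$ is the one spot that needs close attention.
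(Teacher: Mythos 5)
Your proposal is correct and follows essentially the same route as the paper, which gives no detailed argument but simply derives the theorem from the definition of $\psi_\livenv$ (whose environment-realizability it asserts from the structure of synthesis games) together with Theorem~\ref{thm:synthesis} and Theorem~\ref{thm:complete}. Your three-part decomposition --- the sufficiency argument mirroring Theorem~\ref{thm:safety}, the round-robin Mealy transducer witnessing realizability for the environment, and the appeal to Theorem~\ref{thm:complete} for the completeness clause --- just makes explicit the steps the paper leaves implicit, including the label-determinism point you correctly flag as the place where transducer runs, plays, and words must be identified.
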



\subsection{Computing Strongly Fair Assumptions}

We now focus on solution of deterministic player games with objectives
$\AssumeLive(\livenv,\Phi)$, where $\Phi$ is a parity objective.
Given a deterministic game graph $G$, an objective $\Phi$, and a strongly
fair assumption $\livenv$ on edges, we first observe that the
objective $\AssumeLive(\livenv,\Phi)$ can be expressed as an
implication: a strong fairness condition implies $\Phi$.  Hence given
$\Phi$ is a B\"uchi, coB\"uchi or a parity objective, the solution of
games with objective $\AssumeLive(\livenv,\Phi)$ can be reduced to
deterministic player Rabin games.  However, since deterministic Rabin games are
NP-complete we would obtain NP solution (i.e., a NP upper bound), even
for the case when $\Phi$ is a B\"uchi or coB\"uchi objective.  We now
present an efficient reduction to probabilistic games and show that
we can solve deterministic games with objectives
$\AssumeLive(\livenv,\Phi)$ in NP $\cap$ coNP for parity objectives
$\Phi$, and if $\Phi$ is B\"uchi or coB\"uchi objectives, then the
solution is achieved in polynomial time.

\medskip\noindent{\bf Reduction.}
Given a deterministic game graph $G=((S,E),(\SA,\SB))$, a parity objective $\Phi$ 
with a parity function $p$, and a set $\livenv \subseteq E_2$ of player-2 edges
we construct a probabilistic game 
$\wt{G}=((\wt{S},\wt{E}),(\wt{\SA},\wt{\SB},\wt{\SR}),\wt{\trans})$ 
as follows.
\begin{enumerate}
\item \emph{State space.} 
$\wt{S}= S \cup \set{\wt{s} \mid s \in \Source(\livenv), E(s) \setminus \livenv \neq \emptyset}$,
i.e., along with states in $S$, there is a copy $\wt{s}$ of a state $s$ in $\Source(\livenv)$ 
such that all out-going edges from $s$ are not contained in $\livenv$.

\item \emph{State space partition.} 
$\wt{\SA}=\SA$; $\wt{\SR}=\Source(\livenv)$; and 
$\wt{\SB}=\wt{S} \setminus (\wt{\SA} \cup \wt{\SR})$.
The player-1 states in $G$ and $\wt{G}$ coincide; every state
in $\Source(\livenv)$ is a probabilistic state and all other states 
are player-2 states.

\item \emph{Edges and transition.} We explain edges for the three
different kind of states.
\begin{enumerate} 
\item For $s \in \wt{\SA}$ we have $\wt{E}(s)=E(s)$, i.e., the set of edges from
player-1 states in $G$ and $\wt{G}$ coincide.

\item For $s\in \wt{\SB}$ we have the following cases:
(i)~if $s \in \SB$ (i.e., the state is also a player-2 state in $G$, thus it
is not in $\Source(\livenv)$), then $\wt{E}(s)=E(s)$, i.e, then the set of
edges are same as in $G$; and
(ii)~else $s=\wt{s'}$ and $s' \in \Source(\livenv)$ and 
$E(s') \setminus \livenv \neq \emptyset$, and in this case 
$\wt{E}(s) = E(s') \setminus \livenv$.

\item For a state $s \in \wt{\SR}$ we have the following two sub-cases:
(i)~if $E(s) \subseteq \livenv$, then $\wt{E}(s)=E(s)$ and 
the transition function chooses all states in $E(s)$ uniformly at random;
(ii)~else $\wt{E}(s) = E(s) \cup \set{\wt{s}}$, and the transition 
function is uniform over its successors. 
\end{enumerate}
Intuitively, the edges and transition function can be described as follows:
all states $s$ in $\Source(\livenv)$ are converted to probabilistic states, 
and from states in $\Source(\livenv)$ all edges in $\livenv \cap E(s)$ 
are chosen uniformly at random and also the state $\wt{s}$ 
which is copy of $s$ is chosen from where player~2 has the choice of 
the edges from $E(s)$ that are not contained in $\livenv$.
\end{enumerate}
Given the parity function $p$, we construct the parity function
$\wt{p}$ on $\wt{S}$ as follows:
for all states $s\in S$ we have $\wt{p}(s)=p(s)$, and for a 
state $\wt{s}$ in $\wt{S}$, let $\wt{s}$ be a copy of $s$, 
then $\wt{p}(\wt{s})=p(s)$.
We refer to the above reduction as the edge assumption reduction and
denote it by $\AssRed$, i.e., 
$(\wt{G},\wt{p})=\AssRed(G,\livenv,p)$.
The following theorem states the connection about winning in $G$ for the
objective $\AssumeLive(\livenv,\Parity{p})$ and winning almost-surely
in $\wt{G}$ for $\Parity{\wt{p}}$. 
The key argument for the proof is as follows. 
A memoryless almost-sure winning strategy $\wt{\straa}$ 
in $\wt{G}$ can be fixed in $G$, and it can be shown that the strategy in $G$
is sure winning for the Rabin objective that can be derived from 
the objective $\AssumeLive(\livenv,\Parity{p})$.
Conversely, a memoryless sure winning strategy in $G$ for the 
Rabin objective derived from $\AssumeLive(\livenv,\Parity{p})$ 
can be fixed in $\wt{G}$, and it can be shown that the 
strategy is almost-winning for $\Parity{\wt{p}}$ in 
$\wt{G}$.
A key property useful in the proof is as follows: for a 
probability distribution $\mu$ over a finite set $A$ 
that assigns positive probability to each element in $A$, 
if the probability distribution $\mu$ is sampled infinitely
many times, then every element in $A$ appears infinitely often
with probability~1.

\begin{theorem}\label{thrm:reduction-edgeassumption}
Let $G$ be a deterministic game graph, with a parity objective $\Phi$ defined
by a parity function $p$.
Let $\livenv \subseteq E_2$ be a subset of player-2 edges, and 
let  $(\wt{G},\wt{p}) =\AssRed(G,\livenv,p)$.
Then $\waa(\Parity{\wt{p}}) \cap S= \was(\AssumeLive(\livenv,\Phi))$.
\end{theorem}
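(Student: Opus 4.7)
The plan is to prove both inclusions. The objective $\AssumeLive(\livenv,\Parity{p})$ can be written as a Rabin condition: it is the union, over pairs $(s,t)\in\livenv$, of the sets $\{\pi : s\in\Inf(\pi)\text{ and }t\notin\Inf(\pi)\}$, together with the Rabin expansion of $\Parity{p}$. This lets me invoke memoryless determinacy for the Rabin player (namely player~1), while Theorem~\ref{thrm:almost} supplies memoryless almost-sure winning strategies on the probabilistic side. Throughout I therefore work with memoryless player-1 strategies on both sides and analyze the residual one-player structures---a Markov chain in $\wt{G}$, a deterministic graph in $G$---via their strongly connected substructure.

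For the direction $\was(\AssumeLive(\livenv,\Phi))\subseteq \waa(\Parity{\wt{p}})\cap S$, I fix a memoryless sure-winning $\straa$ in $G$ and lift it unchanged to $\wt{\straa}$ in $\wt{G}$ (using $\wt{\SA}=\SA$ and the coincidence of their out-edges). For any player-2 strategy $\wt{\strab}$, I take a random outcome $\wt{\pi}$ and project it to a play $\pi$ in $G$ by contracting each excursion $s'\to\wt{s'}\to t$ into the direct step $s'\to t$; the projected $\pi$ is consistent with $(\straa,\strab')$ for a history-dependent player-2 strategy $\strab'$ read off from $\wt{\strab}$. The probabilistic fact cited in the excerpt---uniform sampling from a finite support almost surely hits every element infinitely often---implies that with probability one, whenever $s'\in\Source(\livenv)$ appears infinitely often in $\wt{\pi}$, every $t$ with $(s',t)\in\livenv$ appears infinitely often in $\pi$, so $\pi$ is strongly fair on $\livenv$ almost surely. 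Sure-winningness of $\straa$ forces $\pi\in\Parity{p}$ almost surely, and $\wt{p}(\wt{s'})=p(s')$ ensures that $\wt{\pi}$ and $\pi$ share the same infinitely-often priority set, so $\wt{\pi}\in\Parity{\wt{p}}$ almost surely, witnessing that $\wt{\straa}$ is almost-sure winning.

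For the reverse inclusion, I fix a memoryless almost-sure winning $\wt{\straa}$ in $\wt{G}$ and set $\straa$ to its restriction to $\SA$. Suppose for contradiction that some $\strab$ yields a play $\pi$ in $G$ that is strongly fair on $\livenv$ yet $\pi\notin\Parity{p}$; let $T=\Inf(\pi)$, whose minimum priority is odd. I then construct a player-2 strategy $\wt{\strab}$ in $\wt{G}$ mimicking $\strab$ at the original player-2 states and, at each copy state $\wt{s'}$, playing an edge into $T$ whenever possible. Fairness of $\pi$ ensures that every $\livenv$-successor of $s'\in T\cap\Source(\livenv)$ lies in $T$, so the lifted set $\wt{T}=T\cup\{\wt{s'} : s'\in T\cap\Source(\livenv),\ E(s')\setminus\livenv\neq\emptyset\}$ is closed under both the probabilistic transitions in $\wt{G}$ and under $\wt{\straa}$. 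Hence the Markov chain induced by $(\wt{\straa},\wt{\strab})$ possesses a reachable recurrent region inside $\wt{T}$ whose minimum priority equals that of $T$ and is odd---contradicting the characterisation of almost-sure parity winning via end-components with even minimum priority. The principal obstacle lies in the copy states $\wt{s'}$ whose only outgoing edges in $\wt{G}$ lead outside $T$ (the case where $\strab$ in $\pi$ never exercises a non-$\livenv$ option from $s'$); handling them requires either enlarging $\wt{T}$ to a genuine end-component while verifying that the added states do not lower the minimum priority below the odd value carried by $T$, or refining $\wt{\strab}$ so that the random walk keeps returning to $T$ with positive asymptotic probability, from which the contradiction to almost-sure winning is still extractable.
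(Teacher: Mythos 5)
Your first inclusion ($\was(\AssumeLive(\livenv,\Phi))\subseteq\waa(\Parity{\wt{p}})\cap S$) is sound and follows the same route the paper sketches: view $\AssumeLive(\livenv,\Phi)$ as a Rabin objective to get a memoryless sure-winning strategy, lift it to $\wt{G}$, project sample plays back by contracting the excursions through copy states, and invoke the infinite-sampling fact to get fairness of the projection almost surely. (The phrase ``consistent with $(\straa,\strab')$ for a single strategy $\strab'$'' is loose---different sample paths project to different plays---but all you need is that each projected play is consistent with $\straa$ against \emph{some} player-2 strategy, which is immediate.)

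The second inclusion has a genuine gap, and it sits earlier than where you locate it. You assert that $\wt{T}=T\cup\{\wt{s'}:\dots\}$ is ``closed under the probabilistic transitions'' because fairness of $\pi$ puts every $\livenv$-successor of $s'\in T\cap\Source(\livenv)$ into $T$. But the support of $\wt{\trans}(s')$ is $E(s')\cup\{\wt{s'}\}$, i.e.\ it contains \emph{every} $G$-successor of $s'$, including the non-$\livenv$ successors; fairness of $\pi$ says nothing about those, and they need not lie in $T=\Inf(\pi)$. So the closure failure is not confined to the copy states $\wt{s'}$---it already occurs at the probabilistic states themselves. Moreover, neither of your proposed repairs goes through as stated: enlarging $\wt{T}$ to a genuine end component can force in states of strictly smaller \emph{even} priority, after which there is no contradiction with almost-sure winning left to extract. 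Concretely, take player-1 states $a,u$ with the single moves $a\to b$, $u\to b$, a player-2 state $b$ with moves to $a$ and $u$, priorities $p(a)=p(b)=1$, $p(u)=0$, and $\livenv=\{(b,a)\}$. The play $(ba)^\omega$ is fair for $\livenv$ and has odd minimum priority, so player~1 is not sure winning for $\AssumeLive(\livenv,\Phi)$ from $b$; yet in $\wt{G}$ the state $b$ randomizes over $\{a,u,\wt{b}\}$ and $\wt{b}$ is forced into $u$, so $u$ (priority $0$) is visited infinitely often almost surely and every reachable end component has even minimum priority. This shows that the set $T$ coming from a fair losing play cannot in general be turned into an odd end component of $\wt{G}$, so this inclusion cannot be salvaged by end-component surgery alone; it needs either a different argument or a more careful treatment of how the non-$\livenv$ successors of probabilistic states are handled in the reduction.
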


Theorem~\ref{thrm:reduction-edgeassumption} presents a linear-time 
reduction for $\AssumeLive(\livenv,\Parity{p})$ to probabilistic games
with parity objectives. 
Using the reduction of Theorem~\ref{thrm:almost} and the results
for deterministic parity games (Theorem~\ref{thrm:sure}) we 
obtain the following corollary.

\begin{corollary}
Given a deterministic game graph $G$, an objective $\Phi$, 
a set $\livenv$ of edges, and a state $s$, 
whether $s \in \was(\AssumeLive(\livenv,\Phi))$ can
be decided in quadratic time if $\Phi$ is a B\"uchi or a coB\"uchi
objective, and in NP $\cap$ coNP if $\Phi$ is a  parity 
objective.
\end{corollary}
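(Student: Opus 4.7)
The plan is to chain together the reductions already established in the paper. First, I would apply the edge-assumption reduction $\AssRed$ of Theorem~\ref{thrm:reduction-edgeassumption}, which in linear time produces a probabilistic game graph $(\wt{G},\wt{p})$ whose parity function has the same number of priorities as $p$, with $|\wt{S}| \le 2|S|$ and $|\wt{E}| \le |E| + |\livenv|$. That theorem asserts $s \in \was(\AssumeLive(\livenv,\Phi))$ iff $s \in \waa(\Parity{\wt{p}})$, so the decision problem reduces to almost-sure winning in a probabilistic parity game of size $O(|G|)$.

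Second, I would feed $(\wt{G},\wt{p})$ into the reduction of Theorem~\ref{thrm:almost}, yielding a deterministic parity game $(\wh{G},\wh{p})$ such that almost-sure winning in $\wt{G}$ coincides with sure winning in $\wh{G}$. If $\wt{p}$ uses $d$ priorities, then $\wh{p}$ uses $d+1$ priorities, and $|\wh{S}| = O(|\wt{S}|\cdot d)$, $|\wh{E}|=O(|\wt{E}|\cdot d)$. Thus the original question becomes: does $s$ lie in $\was(\Parity{\wh{p}})$ on the deterministic game $\wh{G}$.

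Third, I would dispatch the two cases via Theorem~\ref{thrm:sure}. For an arbitrary parity objective $\Phi$, $\wh{G}$ is a deterministic parity game, and Theorem~\ref{thrm:sure}(iii) places its decision problem in NP $\cap$ coNP; since all reductions above are polynomial, this yields the NP $\cap$ coNP upper bound claimed for parity $\Phi$. For a B\"uchi or coB\"uchi objective $\Phi$ we have $d = 2$, so $\wh{G}$ is a deterministic parity game with only $3$ priorities whose size is still linear in $|G|$. Three-priority parity games admit known polynomial algorithms (e.g., by a single outer fixed point that repeatedly solves B\"uchi subgames via attractor computations), running in time $O(|\wh{S}|\cdot|\wh{E}|)$, which is quadratic in the size of $G$.

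The only non-routine step is the last one: establishing the quadratic bound for the B\"uchi and coB\"uchi cases. Here I must verify carefully that the two reductions together blow up the graph only linearly (which relies on $d$ being constant, so that Theorem~\ref{thrm:almost} produces at most a constant-factor expansion) and that a $3$-priority parity game can indeed be solved in $O(nm)$ time. The other steps are essentially a mechanical composition of Theorems~\ref{thrm:reduction-edgeassumption},~\ref{thrm:almost}, and~\ref{thrm:sure}.
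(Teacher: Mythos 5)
Your proposal is correct and follows essentially the same route as the paper, which obtains the corollary precisely by composing the $\AssRed$ reduction of Theorem~\ref{thrm:reduction-edgeassumption} with the reduction of Theorem~\ref{thrm:almost} and then invoking Theorem~\ref{thrm:sure}; you have merely spelled out the size bookkeeping and the polynomial solvability of the resulting three-priority parity game in the B\"uchi/co-B\"uchi case, which the paper leaves implicit.
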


\smallskip\noindent{\bf Complexity of computing a minimal strongly fair
  assumptions.}
We now discuss the problem of finding a minimal set of edges on which
a strong fair assumption is sufficient.  Given a deterministic game graph
$\gamegraph$, a B\"uchi objective $\Phi$, a number $k\in \mathbb{N}$,
and a state $s$, we show that 3SAT can be reduced to the problem of
deciding if there is a strongly fair assumption $\livenv$ with at most
$k$ edges ($|\livenv|\le k)$ that is sufficient for $s$ and~$\Phi$.
  
Given a CNF-formula $f$, we will construct a deterministic game graph
$\gamegraph$, give a B\"uchi objective $\Phi$, an initial state $s$,
and a constant $k$, such that $f$ is satisfiable if and only if there 
exists a strongly
fair assumption $\livenv$ of size at most $k$ that is sufficient for
$s$ and $\Phi$.
In Figure~\ref{fig:np-idea} we show a sketch of how to construct $G$:
for each variable $v_i$ we build two player-2 states, one with the
positive literal $l_i$ and one with the negative literal $\bar{l_i}$.
Each state has an edge to a B\"uchi state $B$ and to non-B\"uchi state
$\bar{B}$.  Furthermore, for each variable we add a player-1 state
$v_i$ that connects the two states $l_i$ and $\bar{l_i}$ representing
the literals.  Similarly, for each clause $c_i=l_{i_1}\vee l_{i_2}
\vee l_{i_3}$ we have one player-1 state $c_i$ connected to the state
representing the literals $l_{i_1}$,$l_{i_2}$, and $l_{i_3}$.  Let $n$
be the number of variables and $c$ be the number of clauses in $f$,
and let $j=n+c$.  Starting from the initial state $11$ we have grid of
player-2 states with $j$ columns and from $2$ up to $j$ lines
depending on the column.  A grid state is connected to its right and
to its upper neighbor.  Each grid states in the last column is
connected either to a state $v_i$ representing a variable or a clause
state $c_i$.
The constructed game $\game$ has $3n + c + \frac{(j+1)(j+2)}{2}$
states and $6n + 2c + (j+1)^2$ edges.  We set $\Phi=\Buchi{\set{B}}$,
$s=11$, and $k=n$.

Given a satisfying assignment $I$ for $f$, we build a strongly fair
assumption $\livenv$ that includes for each variable $v_i$ an edge
such that if $I(v_i)=\true$, then $(l_i,B) \in \livenv$, else
$(\bar{l_i},B) \in \livenv$. ($|\livenv|=n=k$.)  The memoryless
strategy $\alpha$ that sets $\alpha(v_i)$ to $l_i$ if $I(v_i)=\true$,
otherwise to $\bar{l_i}$, and $\alpha(c_i)$ to the state representing
the literal that is satisfied in $c_i$ w.r.t.\ $I$., is a winning
strategy for player~1 for state $11$ and the objective
$\AssumeLive(\livenv, \Phi)$. It follow that $\livenv$ is sufficient
for $11$ and $\Phi$.
For the other direction, we observe that any assumption $\livenv$ of
size smaller or equal to $k$ that includes edges from grid states is
not sufficient for state $11$ and $\Phi$.
Assume that there is some grid edge $(s,t)$ in $\livenv$.  Since
$|\livenv|\le k$ there is some variable $v$ for which neither the edge
$(l,B)$ nor $(\bar{l},B)$ is in $\livenv$.  Due to the structure of
the grid, player~2 can pick a strategy that results in plays that
avoids all except for one player-1 states.
From this state $v$, player~1 has the two choices to go to $l$ or
$\bar{l}$ but from both states player~2 can avoid $B$.  Player~2 has a
winning strategy from $11$ and so $\livenv$ is not sufficient.  An
assumption of size $k$ that is sufficient for $11$ and $\Phi$ only
includes edges from literal states $l_i$ or $\bar{l_i}$.
Given an assumption $\livenv$ of size $k$ that is sufficient for $11$
and $\Phi$.
Since $\livenv$ include only edges from literal states, we can easily
map $\livenv$ to a satisfying assignment $I$ for $f$: If $(l_i,B) \in
\livenv$, then $I(v_i)=\true$, and if $(\bar{l_i},B) \in \livenv$,
then $I(v_i)=\false$.

\begin{figure}[bt]
  \begin{minipage}[t]{0.40\textwidth}
    \centerline{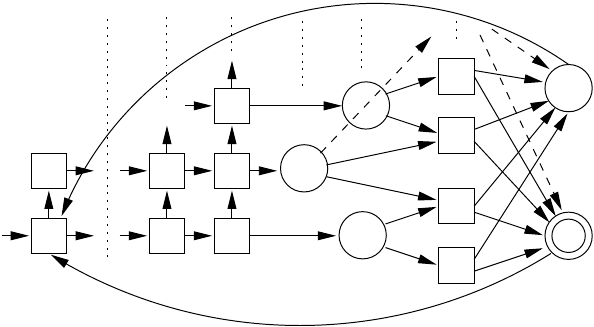} 
    \caption{Idea of the NP-hardness proof.}
    \label{fig:np-idea}
  \end{minipage}
  \begin{minipage}[t]{0.34\textwidth}
    \centerline{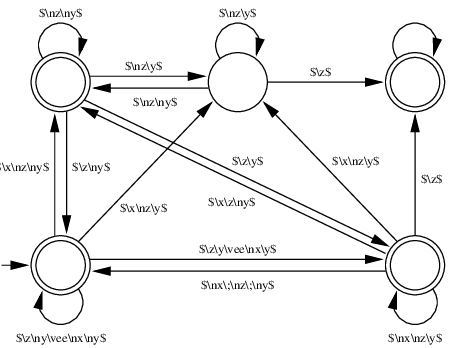}
    \caption{Constructed environment assumption for the specification
      $\always(\req\implies\eventually\grant)$ $\wedge$
      $\always(\cancel\implies\nextt\neg\grant)$.}
    \label{fig:lang}
  \end{minipage}
  \hspace{0.02\textwidth}
  \begin{minipage}[t]{0.22\textwidth}
    \centerline{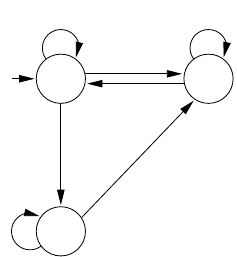}
    \caption{System constructed with assumption shown in Figure~\ref{fig:lang}.}
    \label{fig:system}
  \end{minipage}
\end{figure}

\begin{theorem}
  Given a deterministic game graph $\gamegraph$, a B\"uchi objective
  $\Phi$, a number $k\in \mathbb{N}$, and a state $s$, deciding if
  there is a strongly fair assumption $\livenv$ with at most $k$ edges
  ($|\livenv|\le k$) that is sufficient for $s$ and $\Phi$ is NP-hard.
\end{theorem}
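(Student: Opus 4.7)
The plan is to prove NP-hardness by polynomial-time reduction from 3SAT, following the outline sketched above. Given a 3CNF formula $f$ with $n$ variables and $c$ clauses, I would construct a deterministic game graph $\gamegraph$, pick the B\"uchi objective $\Phi=\Buchi{\set{B}}$, designate an initial state $s$, and set the budget $k=n$; the claim is then that $f$ is satisfiable if and only if there exists a strongly fair assumption $\livenv\subseteq\EB$ with $|\livenv|\le k$ that is sufficient for $s$ and $\Phi$. The construction uses three gadgets: a literal gadget providing, for each variable $v_i$, two player-2 literal states $l_i$ and $\bar l_i$ each with edges to the single B\"uchi state $B$ and to a non-B\"uchi sink $\bar B$, together with a player-1 state $v_i$ having edges to $l_i$ and $\bar l_i$; a clause gadget providing, for each clause $c_j=l_{j_1}\vee l_{j_2}\vee l_{j_3}$, a player-1 state $c_j$ with edges to the three literal states of the clause; and a triangular grid of player-2 states with $n+c$ columns whose last-column states connect respectively to the $n$ variable states and $c$ clause states, with initial state $s$ at the bottom-left corner of the grid.

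For the easy direction, given a satisfying assignment $I$ I would take $\livenv=\{(l_i,B)\mid I(v_i)=\mathrm{true}\}\cup\{(\bar l_i,B)\mid I(v_i)=\mathrm{false}\}$, which has exactly $n=k$ edges. A memoryless player-1 strategy that at each $v_i$ chooses the literal set true by $I$, and at each $c_j$ chooses some literal satisfied by $I$, wins $\AssumeLive(\livenv,\Phi)$ from $s$: no matter how player~2 routes through the grid, the play eventually reaches some literal state $l$ with $(l,B)\in\livenv$ infinitely often, so the strong fairness on $\livenv$ forces $B$ to be visited infinitely often, and $\Phi$ holds.

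For the converse, suppose $\livenv$ with $|\livenv|\le n$ is sufficient. The crucial observation is that any grid edge in $\livenv$ is effectively useless to player~1: by the grid's shape, player~2 has a strategy that visits any prescribed source of a grid edge only finitely often, so she can ignore the corresponding fairness obligation, wasting the budget. Consequently, if $\livenv$ contains a grid edge, then at most $n-1$ literal edges remain, so by pigeonhole some variable $v_i$ has neither $(l_i,B)$ nor $(\bar l_i,B)$ in $\livenv$; player~2 routes the play through the grid to $v_i$, and from either successor she drives the play to $\bar B$ and loops there, falsifying $\Phi$. Hence $\livenv$ consists of $n$ literal-to-$B$ edges, one per variable, and reading off $I(v_i)=\mathrm{true}$ iff $(l_i,B)\in\livenv$ gives a total assignment. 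If some clause $c_j$ were unsatisfied under $I$, player~2 could route to $c_j$, where every successor literal state $l$ has $(l,B)\notin\livenv$, and again avoid $B$; so $I$ must satisfy $f$.

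The main obstacle will be the precise analysis of the grid gadget: one has to verify that from $s$ player~2 can selectively reach any single last-column successor while avoiding the others, and that no allocation of grid edges plus at most $n-1$ literal edges can compensate for an uncovered variable. The triangular shape with $n+c$ columns and increasing column heights is chosen exactly so that player~2 has one degree of freedom per reachable variable-or-clause state, making the budget $k=n$ tight. Polynomiality of the reduction is then immediate from the stated sizes $3n+c+(j+1)(j+2)/2$ states and $6n+2c+(j+1)^2$ edges for $j=n+c$.
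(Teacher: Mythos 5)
Your proposal is correct and follows essentially the same route as the paper: the identical 3SAT reduction with literal gadgets ($l_i$, $\bar l_i$ each pointing to $B$ and $\bar B$), player-1 variable and clause states, the triangular player-2 grid with $j=n+c$ columns, budget $k=n$, and the same two-directional argument (satisfying assignment yields a size-$n$ literal-edge assumption; conversely a grid edge wastes budget and leaves a variable uncovered that player~2 can exploit). You actually spell out the clause-satisfaction check in the converse direction slightly more explicitly than the paper does, but this is a refinement, not a different approach.
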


\medskip\noindent{\bf 
Computing locally-minimal strongly fair assumptions.}
Since finding the minimal set of edges is NP-hard, we focus on computing a 
\emph{locally} minimal set of edges.
Given a deterministic game graph $G$, a state $s \in S$, and a parity
objective $\Phi$, we call a set $\livenv \subseteq E_2$ of player-2
edges locally-minimal strongly fair assumption if $s \in
\was(\AssumeLive(\livenv,\Phi))$ and for all proper subsets $\livenv'$ of
$\livenv$ we have $s \not \in \was(\AssumeLive(\livenv',\Phi))$.  We now
show that a locally-minimal strongly fair assumption set $\livenv^*$ of
edges can be computed by polynomial calls to a procedure that checks
given a set $\livenv$ of edges whether $s \in
\was(\AssumeLive(\livenv,\Phi))$.  The procedure is as follows:
\begin{enumerate}
\item \emph{Iteration $0$.} Let the initial set of assumption edges be
  all player-2 edges, i.e., let $E^0=E_2$; if $s \not\in
  \was(\AssumeLive(E^0,\Phi))$, then there is no subset $\livenv$ of
  edges such that $s \in \was(\AssumeLive(\livenv,\Phi))$.  If $s \in
  \was(\AssumeLive(E^0,\Phi))$, then we proceed to the next iterative
  step.

\item \emph{Iteration $i$.} Let the current set of assumption edges be
  $E^i$ such that we have $s \in \was(\AssumeLive(E^i,\Phi))$.  If
  there exists $e \in E^i$, such that $s \in \was(\AssumeLive(E^i
  \setminus \set{e}, \Phi))$, then let $E^{i+1} = E^i \setminus
  \set{e}$, and proceed to iteration $i+1$.  Else if no such $e$
  exists, then $E^*=E^i$ is a locally-minimal strongly fair assumption set
  of edges.
\end{enumerate}
The claim that the set of edges obtained above is a locally-minimal
strongly fair assumption set can be proved as follows: for a set $\livenv$
of player-2 edges, if $s\not\in\was(\AssumeLive(\livenv,\Phi))$, then
for all subsets $\livenv'$ of $\livenv$ we have
$s\not\in\was(\AssumeLive(\livenv',\Phi))$.  It follows from above that
for the set $E^*$ of player-2 edges obtained by the procedure
satisfies that $s\in \was(\AssumeLive(E^*,\Phi))$, and for all proper
subsets $E'$ of $E^*$ we have $s \not\in \was(\AssumeLive(E',\Phi))$.
The desired result follows.

\begin{theorem}
  \label{thm:compute_liveness}
  The computed set $E^*$ of edges is a locally-minimal strongly
  fair assumption.
\end{theorem}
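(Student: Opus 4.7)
The plan is to establish two things about the output $E^*$ of the procedure: (a) sufficiency, i.e., $s \in \was(\AssumeLive(E^*,\Phi))$, and (b) local minimality, i.e., for every proper subset $E' \subsetneq E^*$, $s \notin \was(\AssumeLive(E',\Phi))$. The backbone of both arguments is the monotonicity property asserted in the paragraph above the theorem, which I would state and verify first as a lemma: if $\livenv' \subseteq \livenv$, then $\was(\AssumeLive(\livenv',\Phi)) \subseteq \was(\AssumeLive(\livenv,\Phi))$.

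To justify monotonicity, I would unfold the definition of $\AssumeLive$. For $\livenv' \subseteq \livenv$, any play witnessing a violation of the strong fairness requirement on $\livenv'$ (some $(s,t)\in \livenv'$ with $s$ visited infinitely often but $t$ only finitely often) witnesses a violation of the requirement on $\livenv$ as well, since $(s,t)\in \livenv$. Hence $\AssumeLive(\livenv',\Phi) \subseteq \AssumeLive(\livenv,\Phi)$, and every player-1 strategy that is sure-winning for the smaller objective remains sure-winning for the larger one, giving the containment on winning sets.

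For (a) I would proceed by induction on the iteration count, using the invariant $s \in \was(\AssumeLive(E^i,\Phi))$. It holds at iteration~$0$ by the precondition of the procedure, and a step $E^{i+1}=E^i\setminus\set{e}$ is taken only when the check $s\in\was(\AssumeLive(E^{i+1},\Phi))$ succeeds, preserving the invariant. Termination follows from $|E^i|$ strictly decreasing and being bounded below by~$0$, so $E^*$ inherits the invariant. For (b) I would use the termination condition: for every $e\in E^*$ we have $s\notin\was(\AssumeLive(E^*\setminus\set{e},\Phi))$. Given any $E'\subsetneq E^*$, pick $e\in E^*\setminus E'$; then $E' \subseteq E^*\setminus\set{e}$, and by the monotonicity lemma $\was(\AssumeLive(E',\Phi)) \subseteq \was(\AssumeLive(E^*\setminus\set{e},\Phi))$, so $s\notin\was(\AssumeLive(E',\Phi))$, as required.

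There is no real computational obstacle; the only delicate point is the monotonicity lemma, and even there the only care needed is in tracking that enlarging the fair-edge set can only enlarge the disjunctive objective $\AssumeLive(\cdot,\Phi)$. Once that is in hand, the induction for (a) and the subset-chase for (b) are routine, and combining them yields the locally-minimal strongly fair assumption.
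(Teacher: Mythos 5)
Your proposal is correct and follows essentially the same route as the paper: the paper's argument is exactly the monotonicity observation (stated contrapositively, that $s\not\in\was(\AssumeLive(\livenv,\Phi))$ implies $s\not\in\was(\AssumeLive(\livenv',\Phi))$ for all $\livenv'\subseteq\livenv$), combined with the loop invariant and the termination condition of the procedure. Your write-up merely makes explicit the induction for sufficiency and the subset-chase for minimality that the paper leaves implicit.
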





\section{Combining Safety and Liveness Assumptions}
\label{sec:combining}

Let $\varphi$ be a specification and let
$\game_\varphi=(\gamegraph,\init,\lambda)$ be the corresponding
synthesis game with winning objective $\Phi$.  We first compute a
non-restrictively safety assumption $\safenv$ as described in
Section~\ref{sec:safety}.  If $\varphi$ is satisfiable, it follows
from Theorem~\ref{thm:env-realizable} and~\ref{thm:compute_safety}
that $\safenv$ exists and that the corresponding environment
assumption is realizable for the environment.
Then, we modify the winning objective of player~1 with the computed
safety assumption: we extend the set of winning plays of player~1 with
all plays, in which player~2 follows one of the edges in $\safenv$.
Since $\safenv$ is safe-sufficient, it follows that $\init$ is live
for player~1 in the modified game.
On the modified game, we compute a locally-minimal strongly fair
assumption as described in Section~\ref{sec:liveness}
(Theorem~\ref{thm:compute_liveness}).  Finally, using
Theorem~\ref{thm:complete} and \ref{thm:liveness}, we conclude the
following.

\begin{theorem}
  Given a specification $\varphi$, if the assumption $\psi =
  \psi_\safenv \cap \psi_\livenv \ne \emptyset$, where $\safenv$ and
  $\livenv$ are computed as shown before, then $\psi$ is a sufficient
  assumption for $\varphi$ that is realizable for the environment.
  If $\varphi$ has a corresponding synthesis game $\game_\varphi$ with
  a safety, reachability, or B\"uchi objective for player~1, then if
  there exists a sufficient environment assumption $\psi\ne
  \emptyset$, then the assumption $\psi = \psi_\safenv \cap
  \psi_\livenv$, where $\safenv$ and $\livenv$ are computed as shown
  before, is not empty.
\end{theorem}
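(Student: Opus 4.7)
The plan is to treat the two assertions separately, using the compositional structure of the algorithm together with earlier theorems. For the sufficiency-and-realizability clause, assume $\psi = \psi_\safenv \cap \psi_\livenv \ne \emptyset$. The fact that the construction produces both sets requires $\init \in \winsure{1,2}(\Phi)$, so by Theorem~\ref{thm:env-realizable} the assumption $\psi_\safenv$ is realizable for the environment, and by Theorem~\ref{thm:compute_safety} $\safenv$ is safe-sufficient, making $\init$ live for player~1 in the modified game with objective $\AssumeSafe(\safenv,\Phi)$. The procedure of Theorem~\ref{thm:compute_liveness} is then run on this modified game and yields $\livenv$ with $\init \in \was(\AssumeLive(\livenv,\AssumeSafe(\safenv,\Phi)))$; Theorem~\ref{thm:liveness} gives that $\psi_\livenv$ is realizable for the environment. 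Translating the resulting player-1 winning strategy via Theorem~\ref{thm:synthesis} into a Moore transducer $\cT$, I would then observe: for any run whose I/O word lies in $\psi_\safenv \cap \psi_\livenv$, the induced play in $\game_\varphi$ neither traverses a $\safenv$-edge nor violates fairness on $\livenv$, hence must satisfy the original objective $\Phi$, so $\cT$ realizes $(\abc^\omega\setminus\psi)\cup\varphi$ and $\psi$ is sufficient for $\varphi$.

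For the non-emptiness clause, assume $\Phi$ is Safety, Reachability, or B\"uchi and let $\psi' \ne \emptyset$ be a sufficient environment assumption satisfying Condition~(1). Composing a Moore transducer realizing $(\abc^\omega \setminus \psi')\cup\varphi$ with a Mealy environment realizing $\psi'$ produces a single word in $\psi' \cap \varphi \subseteq \lang(\varphi)$; the corresponding play of $\game_\varphi$ witnesses $\init \in \winsure{1,2}(\Phi)$. Therefore $\safenv$ is safe-sufficient for $\init$, and $\init$ is live for player~1 on the modified game, whose objective $\AssumeSafe(\safenv,\Phi)$ stays in the same class (Safety/Reach\-ability/B\"uchi) after redirecting $\safenv$-edges to an absorbing accepting sink. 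Theorem~\ref{thm:complete} thus supplies some strongly fair assumption --- in particular $\EB$ itself --- that is sufficient on the modified game, so the iterative procedure of Theorem~\ref{thm:compute_liveness} does not abort at iteration~$0$ and returns a nonempty locally minimal $\livenv$. Hence $\psi_\livenv \ne\emptyset$, and together with $\psi_\safenv \ne\emptyset$ this yields $\psi_\safenv \cap \psi_\livenv \ne\emptyset$ by exhibiting a common environment witness.

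The main obstacle will be the last step in each part: arguing that environment-realizability of the two components \emph{jointly} yields environment-realizability (hence nonemptiness) of their intersection. The unlocking observation is that $\safenv$ consists only of edges leaving $\winsure{1,2}(\Phi)$ while $\livenv$ is computed entirely inside $\winsure{1,2}(\Phi)$; the two edge sets are therefore disjoint, and any environment Mealy transducer realizing $\psi_\livenv$ on the safety-corrected game produces plays that stay inside $\winsure{1,2}(\Phi)$ and automatically avoid every $\safenv$-edge, realizing $\psi_\safenv$ simultaneously. A secondary technicality is that Part~2 routes through Theorem~\ref{thm:complete}, whose scope is restricted to Safety, Reachability, and B\"uchi objectives --- which is exactly why the completeness clause of the theorem is stated only for these classes.
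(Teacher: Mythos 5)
Your proposal retraces the paper's own route: the paper gives this theorem no separate proof, only the construction in Section~\ref{sec:combining} (compute $\safenv$ via Theorems~\ref{thm:env-realizable} and~\ref{thm:compute_safety}, extend the winning condition by the $\safenv$-plays so that $\init$ becomes live, run the procedure of Theorem~\ref{thm:compute_liveness} on the modified game, and invoke Theorems~\ref{thm:complete} and~\ref{thm:liveness}), and both of your clauses follow exactly that chain, including the observation that monotonicity of $\AssumeLive$ in $\livenv$ makes iteration~$0$ succeed whenever some sufficient strongly fair assumption exists.

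The one place where you go beyond the paper is also the one place to be careful. Your ``unlocking observation'' --- that $\safenv$ and $\livenv$ are disjoint, so a single environment transducer witnesses $\psi_\safenv\cap\psi_\livenv\ne\emptyset$ --- is asserted rather than proved. The liveness procedure is initialized with $E^0=E_2$ on the \emph{modified} game, in which the $\safenv$-edges are still present (only the objective was changed), so nothing in the procedure by itself prevents a $\safenv$-edge from surviving into the locally minimal $\livenv$; if some $(s,t)\in\safenv\cap\livenv$ had $s$ visited infinitely often under every $\safenv$-avoiding environment strategy, the fairness requirement (take $(s,t)$ infinitely often) and the safety requirement (never take $(s,t)$) would conflict and the intersection could be empty. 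To close this you should either argue that such edges are always removable without losing sufficiency (taking a $\safenv$-edge already wins the modified objective, so its only role in $\livenv$ is to penalize player~2 for avoiding it), or simply initialize the procedure with $E^0=E_2\setminus\safenv$. Since the paper itself never addresses the nonemptiness of the intersection, this is not a deviation from the paper's argument, but it is the step your writeup must actually nail down for both clauses.
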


Recall the example from the introduction with the signals $\req$,
$\cancel$, and $\grant$ and the specification
$\always(\req\implies\nextt\eventually\grant)$ $\wedge$
$\always((\cancel\vee\grant)\implies\nextt\neg\grant)$.  Applying our
algorithm we get the environment assumption $\wh{\psi}$ shown in
Figure~\ref{fig:lang} (double lines indicate B\"uchi states).  We
could not describe the language using an LTL formula, therefore we
give its relation to the assumptions proposed in the introduction.
Our assumption $\wh{\psi}$ includes $\psi_1=\always(\neg\cancel)$ and
$\psi_2=\always(\eventually(\neg\cancel))$, is a strict subset of
$\psi_6 = \xi \weakuntil (\xi\wedge(\cancel\vee
\grant)\wedge\nextt\grant)$ with $\xi=
\req\implies\nextt\eventually(\neg\cancel \vee \grant)$, and is
incomparable to all other sufficient assumptions.
Even though, the constructed assumption is not the weakest w.r.t.\
language inclusion, it still serves its purpose:
Figure~\ref{fig:system} shows a system synthesized with a modified
version of \cite{Jobstm06c} using the assumption $\wh{\psi}$.

\section{Acknowledgements}

The authors would like to thank Rupak Majumdar for stimulating and
fruitful discussions.




\bibliographystyle{plain}
\bibliography{main}
\end{document}